\DeclareMathAlphabet\mathpzc{OT1}{pzc}{m}{it}
\let\mathcal=\mathpzc
\def\intinfty{\int\limits_{\!\!-\infty\,\,}^{\,\,\infty\!\!}\kern-0.0em}
\def\iintinfty{\mathop{\int\!\!\int}\limits_{\!\!-\infty\,\,}^{\,\,\infty\!\!}\kern-0.0em}
\def\iiintinfty{\mathop{\int\!\!\int\!\!\int}\limits_{\!\!-\infty\,\,}^{\,\,\infty\!\!}\kern-0.0em}
\def\~#1{{\-ox{\sf#1}}}
\def\N{{\mathbb N}}
\def\@#1{{\cal #1}}
\title{A hybrid adaptive MCMC algorithm in function spaces\thanks{This work was
supported by the NSFC under grant number 11301337.}} 
\author{Qingping Zhou\footnotemark[2] \and Zixi Hu\footnotemark[3] \and Zhewei Yao\footnotemark[3] \and Jinglai Li\footnotemark[4]}
\begin{document}
\maketitle
\newcommand{\slugmaster}{%
\slugger{juq}{xxxx}{xx}{x}{x--x}}%slugger should be set to juq, siads, sifin, or siims
\renewcommand{\thefootnote}{\fnsymbol{footnote}}

\footnotetext[2]{Department of Mathematics and Institute of Natural Sciences,   
Shanghai Jiao Tong University, 800 Dongchuan Rd, Shanghai 200240, China, (zhouqingping@sjtu.edu.cn). }

\footnotetext[3]{Department of Mathematics, Zhiyuan College,   
Shanghai Jiao Tong University, 800 Dongchuan Rd, Shanghai 200240, China, (\{yzw,hzx\}@sjtu.edu.cn).}
\footnotetext[4]{Corresponding author, Institute of Natural Sciences, Department of Mathematics, and 
the MOE Key Laboratory of Scientific and Engineering Computing, 
Shanghai Jiao Tong University, 800 Dongchuan Rd, Shanghai 200240, China, (jinglaili@sjtu.edu.cn).}

\renewcommand{\thefootnote}{\arabic{footnote}}

\begin{abstract}
The preconditioned Crank-Nicolson (pCN) method is a Markov Chain Monte Carlo (MCMC) scheme, specifically designed to perform Bayesian inferences in function spaces. 
Unlike many standard MCMC algorithms, the pCN method can preserve the sampling efficiency under the mesh refinement, a property referred to as being dimension independent.  
In this work we consider an adaptive strategy to further improve the efficiency of pCN. 
In particular we develop a hybrid adaptive MCMC method: the algorithm performs an adaptive Metropolis scheme in a chosen finite dimensional subspace,
and a standard pCN algorithm in the complement space of the chosen subspace.  We show that the proposed algorithm satisfies certain important ergodicity conditions.  
Finally with numerical examples we demonstrate that the proposed method has competitive performance with existing adaptive algorithms.  
\end{abstract}

\begin{keywords}
adaptive Metropolis,
Bayesian inference,
function space,
inverse problems,
Markov Chain Monte Carlo.
\end{keywords}

\begin{AMS}\end{AMS}

\pagestyle{myheadings}
\thispagestyle{plain}
\markboth{Q. Zhou, Z. Hu, Z. Yao and J. Li}{Hybrid MCMC in function spaces}

\section{Introduction}
Many real-world inverse problems require to 
estimate unknowns that are functions of space and/or time. 
Solving such problems with the Bayesian approaches~\cite{kaipio2006statistical,stuart2010inverse}, 
has become increasing popular, largely due to its ability to quantify the uncertainty in the estimation results. 
To implement the Bayesian inference in those problems, it is often required to perform Markov Chain Monte Carlo (MCMC) simulations in function spaces.
Usually, the unknown is represented with a finite-dimensional parametrization, and then MCMC is applied to the resulting finite dimensional problems. 
Many standard MCMC algorithms, such as the popular random walk Metropolis-Hastings (RWMH), are known to be dimension dependent, 
as they can become arbitrarily slow as the discretization dimensionality increases~\cite{roberts2001optimal,mattingly2012diffusion}. 
To this end, a very interesting line of research is to develop \emph{dimension-independent} MCMC algorithms 
by requiring the algorithms to be well-defined in the function spaces.
In particular, a family of dimension-independent MCMC algorithms, known as the preconditioned Crank Nicolson (pCN) algorithms, were presented in \cite{cotter2013mcmc} by constructing a Crank-Nicolson 
discretization of a stochastic partial differential equation (SPDE) that preserves the reference measure. 
Several variants of the pCN algorithms have been developed to further improve the sampling efficiency. 
For example, a class of algorithms accelerate the pCN scheme using the gradient information of the likelihood functions, such algorithms include, 
 the operator-weighted proposal method~\cite{law2014proposals},  the dimension-independent likelihood-informed MCMC~\cite{cui2016dimension}, 
and  the generalized pCN algorithm~\cite{rudolf2015generalization},  just to name a few. 

In this work, we focus on an alternative way to improve the sampling efficiency, the adaptive MCMC methods. 
Simply speaking, the adaptive MCMC algorithms improve the proposal based on the sampling history from the targeting distribution~(c.f. \cite{andrieu2008tutorial,atchade2009adaptive,roberts2009examples} and the references therein) as the iterations proceed. A major advantage of the adaptive methods is that they only require the ability
to evaluate the likelihood functions, which makes them particularly convenient for problems 
with black-box models. In a recent work~\cite{hu2015adaptive}, we developed 
an adaptive pCN (ApCN) algorithm based on the idea of adapting the proposal covariance to approximate that of the posterior. 
The ApCN algorithm requires the proposal covariance to be diagonal, assuming the unknown is represented with the Karhunen-Lo\`eve expansion~\cite{hu2015adaptive}, as that the implementation involves computing the square root of a large matrix,
which is very intensive if the covariance is not diagonal~\cite{hu2015adaptive}. 
In this work, we present an improved adaptive MCMC algorithm for functions, particularly addressing this limitation of the ApCN algorithm. 
The proposed algorithm is essentially a hybrid scheme:
it performs an adaptive Metropolis (AM) scheme in a chosen finite dimensional subspace of the state space and a pCN in the complement  of it. 
In the present algorithm, the proposal covariance directly approximates that of the posterior without assuming a diagonal structure.
With numerical examples, we show that the present algorithm can outperform the ApCN scheme, thanks to the relaxation of the diagonal structure.
Another important improvement of the present method is about the convergence property of the algorithm.
Recall that, to show the ergodicity property of the ApCN method, we need to impose an artificial modification of the likelihood function to ensure that the support
of the posterior is bounded~\cite{hu2015adaptive}; however, for the present hybrid algorithm, we can show that it satisfies the same ergodicity conditions without modifying the likelihood function. 

We note that, other dimension independent MCMC algorithms are available, such as
 the stochastic Newton MCMC~\cite{martin2012stochastic,petra2014computational},
the infinite dimensional  Riemann manifold Hamiltonian Monte Carlo method~\cite{bui2014solving}, 
 the dimension independent adaptive Metropolis~\cite{chen2015accelerated}, 
 and the infinite dimensional independence sampler~\cite{feng2015adaptive}.  
Comparison of these methods with the pCN based algorithms are not in the scope of the work.

The reminder of the paper is organized as follows. In section~\ref{sec:method} we present our hybrid adaptive MCMC algorithm as well as 
some theoretical results regarding its ergodicity.  
In section~\ref{sec:examples} we provide several numerical examples to demonstrate the performance of the proposed algorithm. 
Finally we offer some concluding remarks in section~\ref{sec:conclusion}.

\section{The hybrid adaptive MCMC method}\label{sec:method}

\subsection{Bayesian inferences in function spaces}
We present the standard setup of the Bayesian inverse problem following \cite{stuart2010inverse}. 
We consider a separable {Hilbert} space $X$ with inner product $\<\cdot,\cdot\>_X$.
 Our goal is to estimate the unknown  $u\in X$ from data $y\in Y$ where $Y$ is the data space and $y$ is related to $u$ via a likelihood function $L(x,y)$.
In the Bayesian inference we assume that the prior $\mu_0$ of $u$, is a  (without loss of generality)~zero-mean Gaussian measure defined on $X$ with covariance operator $\@C_0$,
i.e. $\mu_0 = N(0,\@C_0)$. 
Note that $\@C_0$ is symmetric positive and of trace class.
The range of $\@C_0^{\frac12}$,
\[E = \{u = \@C_0^{\frac12} x\, |\, x\in X\}\subset X,\]
which is a Hilbert space equipped with inner product~\cite{da2006introduction},
\[\<\cdot,\cdot\>_E = \<\@C_0^{-\frac12}\cdot,\@C_0^{-\frac12}\cdot\>_X ,\]
is called the Cameron-Martin space of measure $\mu_0$. %And we denote $\<\cdot,\cdot\>_{C_0} = \<C_0^{-\frac{1}{2}}\cdot, C_0^{-\frac{1}{2}}\cdot\>$. 
In this setting, the posterior measure $\mu^y$ of $u$ conditional on data $y$
is provided by the Radon-Nikodym derivative:
\begin{equation} \frac{d\mu^y}{d\mu_{0}}(u) =\frac1Z\exp(-\Phi^y(u)),
 \label{e:bayes}
\end{equation}
with $Z$ being a normalization constant, 
%where $\Phi(u) = |C_\eta^{-\frac12} (Gu-y)|^2_2$.
%Note that Eq.~\eqref{e:bayes} 
which can be interpreted as the Bayes' rule in the infinite dimensional setting.
In what follows, without causing any ambiguity, we shall drop the superscript $y$ in $\Phi^y$ and $\mu^y$ for simplicity, while keeping 
in mind that these items depend on the data $y$.  
A typical example is the so-called Bayesian inverse problems~\cite{kaipio2006statistical,stuart2010inverse}, which assumes that the unknown $u$ is mapped to the data $y$ via a forward model  
$y = G(u)+\zeta$,
where $G:X\rightarrow R^d$ and $\zeta$ is a $d$-dimensional Gaussian noise with mean zero and covariance $C_\zeta$. 
In this case $\Phi(u) = |C_\zeta^{-\frac12}(Gu-y)|^2_2$.

For the inference problem to be well-posed, one typically requires the functional $\Phi$ to satisfy the Assumptions (6.1) in \cite{cotter2013mcmc}.
Finally we quote the following lemma~(\cite{da2006introduction}, Chapter~1), which will be useful later:
\begin{lemma}
%Since ${Q}\in L^+_1 (X)$, 
There exists a complete orthonormal \label{lm:eigens}
basis $\{e_j\}_{j\in\N}$ on $X$ and a sequence of non-negative numbers $\{\alpha_j\}_{j\in\N}$
such that ${\@C_0} e_j = \alpha_j e_j$ and $\sum_{j=1}^\infty \alpha_j <\infty$, i.e., 
 $\{e_j\}_{k\in\N}$ and $\{\alpha_j\}_{k\in\N}$ being the eigenfunctions and eigenvalues of $\@C_0$ respectively.
\end{lemma}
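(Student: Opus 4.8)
The statement is the classical Hilbert--Schmidt spectral theorem applied to a compact, self-adjoint, positive operator, sharpened by the trace-class hypothesis; the plan is simply to assemble these ingredients, using only that $\@C_0$ is given to be symmetric, positive, and of trace class. First I would record that $\@C_0$ is compact: every trace-class operator on a separable Hilbert space is in particular Hilbert--Schmidt and hence compact (it is a norm limit of finite-rank operators). Thus $\@C_0$ is a compact self-adjoint operator on $X$, and the spectral machinery for such operators applies.

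Next I would invoke the spectral theorem for compact self-adjoint operators: there exists an at most countable orthonormal system of eigenvectors of $\@C_0$, with real eigenvalues that can accumulate only at $0$, whose closed linear span coincides with $\overline{\mathrm{ran}\,\@C_0} = (\ker\@C_0)^\perp$. Positivity of $\@C_0$ forces every such eigenvalue to be nonnegative. Adjoining to this system an orthonormal basis of the closed subspace $\ker\@C_0$ --- each of whose elements is trivially an eigenvector with eigenvalue $0$ --- and using separability of $X$, I obtain a complete orthonormal basis $\{e_j\}_{j\in\N}$ of $X$ with $\@C_0 e_j = \alpha_j e_j$ and $\alpha_j \geq 0$ for every $j$.

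Finally, the summability $\sum_{j=1}^\infty \alpha_j < \infty$ is precisely the trace-class property read off in this particular basis: for the positive operator $\@C_0$ the trace $\sum_j \langle \@C_0 e_j, e_j\rangle_X$ is independent of the chosen orthonormal basis, and in the eigenbasis it equals $\sum_j \alpha_j$; finiteness of this sum is exactly the statement that $\@C_0$ is of trace class.

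The lemma presents essentially no obstacle beyond invoking the spectral theorem. If one insisted on proving that theorem from scratch, the crux would be to show that a nonzero compact self-adjoint operator attains its norm at an eigenvector (so that $\pm\|\@C_0\|_{\mathrm{op}}$ is an eigenvalue), then to iterate this construction on the successive orthogonal complements of the eigenvectors already found, and to use compactness to conclude both that the eigenvalues tend to $0$ and that no part of $(\ker\@C_0)^\perp$ is left uncovered. Since this is entirely standard --- the lemma is quoted from \cite{da2006introduction} --- I would present it by citation rather than reproving it.
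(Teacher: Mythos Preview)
Your proposal is correct and in fact goes further than the paper: the paper does not prove this lemma at all but simply quotes it from \cite{da2006introduction}, Chapter~1, as a standard fact about Gaussian measures on separable Hilbert spaces. Your sketch --- trace class $\Rightarrow$ compact, then the spectral theorem for compact self-adjoint operators, positivity for $\alpha_j\geq 0$, and reading off $\sum_j\alpha_j<\infty$ from the trace --- is exactly the standard argument behind the cited result, and your closing remark that one would present it by citation matches precisely what the paper does.
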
\\
For convenience's sake, we assume that the eigenvalues are in a descending order: $\lambda_j\geq\lambda_{j+1}$ for any $j\in\N$.
$\{e_j\}_{j=1}^\infty$ are known as the Karhunen-Lo\`eve~(KL) modes associated with $\@N(0,\@C_0)$.

\subsection{The preconditioned Crank-Nicolson algorithm}
We now briefly review the family of Crank-Nicolson algorithms for infinite dimensional Bayesian inferences, following the presentation of \cite{cotter2013mcmc}. 
Simply speaking the algorithms are based on the stochastic partial differential equation (SPDE)
\begin{equation}
\frac{du}{ds}=-\mathcal{K}\mathcal{L}u+\sqrt{2\mathcal{K}}\frac{db}{ds},\label{e:spde}
\end{equation}
where $\@L=\@C_0^{-1}$ is the precision operator for $\mu_0$, $\@K$ is a positive operator,
 and $b$ is a Brownian motion in $X$ with covariance operator the identity. 
The proposal is then derived by applying the Crank-Nicolson (CN) discretization to the SPDE~\eqref{e:spde}, yielding, 
\begin{equation}
v=u-\frac12\delta\mathcal{K}\mathcal{L}(u+v)+\sqrt{2\mathcal{K}\delta}\xi_0, \label{e:prop}
\end{equation}
for a white noise $\xi_0$ and $\delta\in(0,2)$.
In \cite{cotter2013mcmc}, two choices of $\@K$ are proposed, resulting in two different algorithms.
First, one can choose $\@K=\@I$, the identity, obtaining: 
\[
(2\@C+\delta \@I)v=(2\@C-\delta \@I)u+\sqrt{8\delta}w,
\]
where $w\sim \@N(0,\@C_0)$,
which is known as the plain CN algorithm. 
Alternatively one can choose $\@K=\@C_0$, resulting in the pCN proposal:
\begin{equation}
v=(1-\beta^2)^{\frac12}u+ \beta w, \label{e:pcn}
\end{equation}
where \[\beta = \frac{\sqrt{8\delta}}{2+\delta}.\]
It is easy to see that $\beta\in[0,1]$.
In both CN and pCN algorithms,  the acceptance probability is
\begin{equation}
a(v,u) = \min\{1, \exp{\Phi(u)-\Phi(v)}\}. \label{e:acc}
\end{equation}

\subsection{The hybrid  algorithm}\label{pr:acc}
We start with a non adaptive version of the proposed hybrid algorithm. 
%First recall that the unknown function $u$ can be represented with the K-L expansion with $N$ modes:
%\begin{equation}
%u^N = \sum_{i=1}^N u_i e_i,
%\end{equation}
%where the prior distribution is $u^N \sim N(0,C_N)$ and $C_N =\mathrm{diag}(\alpha_1,...,\alpha_N)$.  
For a prescribed integer $J>0$ (the interpretation of $J$ and how to determine it will be discussed later), we let $X^+ = \mathrm{span}\{e_1,...,e_J\}$ and ${X}^- = X\backslash X^+$. 
For any $u\in X$, we can write 
$u=u^+ +u^-$ where $u^+\in X^+$ and $u^-\in X^-$. 
Our algorithm proposes according to  
\begin{subequations} \label{e:propnew}
\begin{equation}
v =v^+ +v^-,
\quad \mathrm{with}\quad  v^+= u^+ +\beta w^+,\quad \mathrm{and}\quad v^- = (1-\beta^2)^{\frac12} u^- +\beta w^-,
\end{equation}
where 
\begin{equation}
w^+ = \sum_{i=1}^J w_i e_i, \quad \mathrm{with} \quad (w_1,...,w_J)^T\sim N(0,\Sigma) ,
\end{equation}
with $\Sigma$ being a $J\times J$ covariance matrix (and thus it must be symmetric and positive definite), and
 \begin{equation}
 w^- = \sqrt{\@B} \xi_0, 
\end{equation}
in which
\[\@B \,\cdot = \sum_{j=J+1}^\infty \alpha_{j}\<e_j,\cdot\>e_j,\]
\end{subequations}
and $\xi_0$ is a white Gaussian noise.
It is easy to see that the proposal defined by Eqs.~\eqref{e:propnew} is a Gaussian measure $\@N(m(u),\beta^2\@C)$ with mean $m = u^++(1-\beta^2)^{\frac12} u^-$ and covariance $\@C$ such that
\[
\@C\cdot= (\<e_1,\cdot\>,...\<e_J,\cdot\>) \Sigma (e_1,...,e_J)^T + \@B\cdot.
\]
The key in the algorithm is to choose an appropriate covariance matrix $\Sigma$.
Before discussing how to choose $\Sigma$, we first  show that under mild conditions, the proposal~\eqref{e:propnew} results in well-defined acceptance probability in a function space:
\begin{proposition}
Let $q(u,\cdot)$ be the proposal distribution associated to Eq.~\eqref{e:propnew}. 
Define measures $\eta(du,dv)=q(u,dv)\mu(du)$ and $\eta^\bot(du,dv)=q(v,du)\mu(dv)$ on $X\times X$. If $\Sigma$ is symmetric and positive definite,  $\eta^\bot$ is absolutely continuous with respect to $\eta$, and 
\begin{equation}
\frac{d\eta^\bot}{d\eta}(u,v) = \exp[\Phi(u)-\Phi(v) +\frac12 \sum_{i=1}^J \frac{(\<u,\,e_i\>^2-\<v,e_i\>^2)}{\alpha_i}]. \label{e:accnew}
\end{equation}
\end{proposition}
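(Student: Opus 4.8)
The plan is to remove the likelihood by passing to a Gaussian reference, to exploit the orthogonal splitting $X = X^+\oplus X^-$ so that the computation decomposes into a classical pCN reversibility statement on $X^-$ and an elementary finite-dimensional Gaussian density ratio on $X^+$, and finally to reinstate the likelihood via the chain rule.

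First I would introduce the reference measures $\bar\eta(du,dv) = q(u,dv)\,\mu_0(du)$ and $\bar\eta^\bot(du,dv) = q(v,du)\,\mu_0(dv)$, i.e.\ $\eta$ and $\eta^\bot$ with the posterior $\mu$ replaced by the prior $\mu_0$. Since $\mu_0 = \mu_0^+\otimes\mu_0^-$, with $\mu_0^+ = N(0,\mathrm{diag}(\alpha_1,\dots,\alpha_J))$ on $X^+\cong\R^J$ and $\mu_0^- = N(0,\@B)$ the restriction of $\mu_0$ to $X^-$, and since the proposal \eqref{e:propnew} factorises as $q(u,\cdot) = q^+(u^+,\cdot)\otimes q^-(u^-,\cdot)$ with $q^+(u^+,\cdot) = N(u^+,\beta^2\Sigma)$ a Gaussian random walk on $\R^J$ and $q^-(u^-,\cdot) = N((1-\beta^2)^{\frac12}u^-,\beta^2\@B)$ the pCN kernel on $X^-$, both $\bar\eta$ and $\bar\eta^\bot$ factorise accordingly, $\bar\eta = \bar\eta^+\otimes\bar\eta^-$ and $\bar\eta^\bot = (\bar\eta^+)^\bot\otimes(\bar\eta^-)^\bot$; it therefore suffices to treat the two factors separately.

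For the $X^-$ factor I would use the standard reversibility of the pCN kernel with respect to $\mu_0^-$: $\bar\eta^-$ is a centred Gaussian on $X^-\times X^-$ whose covariance operator has both diagonal blocks equal to $\@B$ and both off-diagonal blocks equal to $(1-\beta^2)^{\frac12}\@B$, hence is unchanged under swapping the two coordinates, and since a Gaussian measure on a separable Hilbert space is determined by its mean and covariance this yields $(\bar\eta^-)^\bot = \bar\eta^-$. For the $X^+$ factor, the positive-definiteness of $\Sigma$ (the only place this hypothesis is needed) makes $\bar\eta^+$ and $(\bar\eta^+)^\bot$ have strictly positive Lebesgue densities on $\R^J\times\R^J$; since the random-walk density is symmetric in its two arguments, the Gaussian normalising constants cancel and $\frac{d(\bar\eta^+)^\bot}{d\bar\eta^+}(u^+,v^+)$ collapses to $\pi_0^+(v^+)/\pi_0^+(u^+)$, where $\pi_0^+$ is the Lebesgue density of $\mu_0^+$; this equals $\exp[\frac12\sum_{i=1}^J(\<u,e_i\>^2-\<v,e_i\>^2)/\alpha_i]$. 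Multiplying the two factors gives $\bar\eta^\bot\ll\bar\eta$ with exactly this Radon-Nikodym derivative.

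Finally, writing $\mu(du) = Z^{-1}\exp(-\Phi(u))\,\mu_0(du)$ via \eqref{e:bayes}, one has $\eta(du,dv) = Z^{-1}\exp(-\Phi(u))\,\bar\eta(du,dv)$ and $\eta^\bot(du,dv) = Z^{-1}\exp(-\Phi(v))\,\bar\eta^\bot(du,dv)$, so $\eta^\bot\ll\eta$ and the chain rule yields \eqref{e:accnew}; here one uses that $\Phi$ is $\mu_0$-a.s.\ finite under the Assumptions (6.1) of \cite{cotter2013mcmc} (so that $\bar\eta$ and $\eta$ are equivalent) and that the sum over $i$ is a finite sum with $\alpha_i>0$ for $i\le J$. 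I expect the only nonroutine ingredient to be the $X^-$ reversibility step; this is the content of the pCN construction in \cite{cotter2013mcmc} and can be quoted from there or verified directly from the uniqueness of Gaussian measures as above, after which everything else is bookkeeping once the product structure has been identified.
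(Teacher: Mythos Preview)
Your proof is correct and follows essentially the same architecture as the paper's: introduce the Gaussian reference measures $\eta_0=q(u,dv)\mu_0(du)$ and $\eta_0^\bot$, compute $d\eta_0^\bot/d\eta_0$, and then reinstate the likelihood via the chain rule. The paper dispatches the computation of $d\eta_0^\bot/d\eta_0$ with the phrase ``by some elementary calculations,'' whereas you have supplied those calculations explicitly via the product decomposition $X=X^+\oplus X^-$ (pCN reversibility on $X^-$, finite-dimensional density ratio on $X^+$); this is a natural way to fill in that step and not a genuinely different route.
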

\begin{proof}
Define $\eta_0(du,dv)=q(u,dv)\mu_0(du)$ and $\eta_0^\bot(du,dv)=q(v,du)\mu_0(dv)$. Both $\eta_0$ and $\eta_0^\bot$ are Gaussian. First, we have 
\[
\eta(du,dv)=q(u,dv)\mu(du),\quad
\eta_0(du,dv)=q(u,dv)\mu_0(du),
\]
and $\mu$,$\mu_0$ are equivalent. It follows that $\eta$ and $\eta_0$ are equivalent and 
\begin{equation}
\frac{d\eta}{d\eta_0}(u,v)=\frac{d\mu}{d\mu_0}(u)=  \frac1{Z}\exp(-\Phi(u)). \label{e:eta}%-\frac12\sum_{i=1}^J\<u,e_i\>^2/\alpha_i).
\end{equation}
Obviously following the same argument, we also have that $\eta^\bot$ and $\eta_0^\bot$ are equivalent and 
\begin{equation}
\frac{d\eta^\bot}{d\eta_0^\bot}(u,v)= \frac1Z\exp(-\Phi(v)). \label{e:etabot}%-\frac12\sum_{i=1}^J\<v,e_i\>^2/\alpha_i).
\end{equation}
By some elementary calculations, one can show that, 
\begin{equation}
\frac{d\eta_0^\bot}{d\eta_0}(u,v)= \exp(\frac12\sum_{i=1}^J\frac{(\<u,e_i\>^2-\<v,e_i\>^2)}{\alpha_i}). \label{e:eta0}
\end{equation}
It follows immediately from Eqs.~\eqref{e:eta} - \eqref{e:eta0} that $\eta$ and $\eta^\bot$ are equivalent and 
Eq.~\eqref{e:accnew} holds. 
\end{proof}

From the detailed balance condition one can derive that the acceptance probability of proposal~\eqref{e:propnew} is 
\begin{equation}
a(u,v) =\min\{1, \frac{d\eta^\bot}{d\eta}(u,v)\}, \label{e:accnew2}%= \exp[\Phi(u)-\Phi(v) +\frac12 \sum_{i=1}^J \frac{(\<u,\,e_i\>^2-\<v,e_i\>^2)}{\alpha_i}].
\end{equation}
where $\frac{d\eta^\bot}{d\eta}(u,v)$ is given by Eq.~\eqref{e:accnew}.

We now consider how to determine $\Sigma$. 
A rule of thumb in designing efficient MCMC algorithms is that the proposal covariance should be close to the
covariance operator of the posterior~\cite{roberts2001optimal,haario2001adaptive}. 
Now suppose the posterior covariance is $\@C^{y}$, and
one can determine the proposal covariance $\@C$ by solving
  \begin{equation}
  \min_{\Sigma}\|\@C-\@C^y\|_{HS}, \label{e:minhs}
  \end{equation}
   where $\|\cdot\|_{HS}$ is the Hilbert-Schmidt operator norm.
 By some basic algebra, we can show that the optimal solution of Eq~\eqref{e:minhs} is
 \[\sigma_{i,j} = \< \@C^y e_i,e_j\>^{-1} ,\]
 for $i,j=1...J$, where $\sigma_{i,j}$ are the entries of $\Sigma$. 
Since $\@C^y$ is the posterior covariance,  for any $v$ and $v' \in X$, we have~\cite{da2006introduction},
 \begin{equation}
\<\@C^y v, v'\> =\int \<v,u-m^y\>\<v',u-m^y\> \mu(du), \label{e:cov}
\end{equation}
where $m^y$ is the mean of $\mu$.
Using Eq.~\eqref{e:cov}, we can derive that
\begin{equation}
%x_{j} = \int u_jd\mu^{y},\\
 \sigma_{i,j} =\int\<u-m^y,e_i\>\<u-m^y,e_j\>\mu(du), \label{e:sigma_ij} 
\end{equation}
 for $i,j = 1...J$.

Since Eq.~\eqref{e:sigma_ij} can not be computed directly, we estimate the covariance matrix $\Sigma$ with the adaptive Metropolis method. 
Simply speaking, the AM algorithm starts with an initial guess of $\Sigma$ and then adaptively updates the $\Sigma$ based on the sample history.
Namely,  suppose we have a set of samples $\{ u_1,...,u_n\}$ and  let $x_i$ be the projection of $u_i$ onto the basis $(e_1,...,e_J)$:
\[ x_i = (\<u_i,e_1\>,...\<u_i,e_J\>).\]
 We estimate $\Sigma$ with
\begin{subequations}\label{e:update}
\begin{gather}
\hat{x} = \frac1n\sum_{i=1}^n x_i,\\
\hat{\Sigma} = \frac{1}{n-1}\sum_{i=1}^n (x_i-\hat{x}) (x_i-\hat{x})^T + \delta I, \label{e:sighat}
\end{gather}
\end{subequations}
where $\delta$ is a small positive constant and $I$ is the identity matrix. 
Note that the term $\delta I$ in Eq.~\eqref{e:sighat} is introduced to stabilize the iteration, as is used in \cite{haario2001adaptive}. 
For efficiency's sake,  Eq~\eqref{e:update} can be recast in a recursive form (Eq.~(7) in \cite{haario2001adaptive}). 
It should be noted that it is not robust to estimate the parameter values with a very small number of samples, 
and to this end we employ a pre-run,  drawing a certain number of samples with a standard pCN algorithm, before starting the adaptation.  
Finally, we note that, in principle, a sample $x$ with very large norm can distort the estimate of the covariance matrix $\Sigma$,   
and to prevent this from happening, we introduce a norm threshold $R\gg0$, and if a sample’s norm exceeds this threshold, it is not use it to update the covariance. 
This step is essential for our convergence results. 
We describe the complete hybrid adaptive algorithm in Algorithm~\ref{al:apcn}. 
\begin{algorithm}[!tb]
 \caption{The hybrid adaptive algorithm}
    \label{al:apcn}
    \begin{algorithmic}[1]
    
\State  Initialize $u_1\in X $;
\State drawn $N'$ samples with a standard pCN algorithm, denoted as $\{u_i\}_{i=1}^{N'}$;
%\State Let $u_0 = u_{N'}$;
\State Let $S = \{u_i, i =1...N'~|~\|u_i\|_X<R\}$
\State Compute $\Sigma$ using Eqs.~\eqref{e:update} and samples in $S$;

   \For {$n=N'$ to $N-1$}
      
	\State Propose $v$ using Eq~\eqref{e:propnew};	
        \State Draw $\theta\sim U[0,1]$
        \State Compute $a(u,v)$ with Eq.~\eqref{e:accnew2}; %: = \min\{1, \exp[{\Phi(u^n)}-{\Phi(v)}]\}$;
        
        \If {$\theta\leq a$ } 
            \State $u_{n+1}\leftarrow v$;
						\Else
						\State $u_{n+1}\leftarrow u^{n}$;
        \EndIf
				\If{$\|u^{n+1}\|_X<R$}
				\State $S\leftarrow S\cup\{u^{n+1}\}$;	
				\State Update $\Sigma$ using Eqs.~\eqref{e:update} and samples in $S$;
		    \EndIf	
						\EndFor

    \end{algorithmic}
		\medskip
		
    \end{algorithm}

The basis idea behind the proposed method may become 
 more clear if  we look at the projections of the proposal onto each eigenmodes:
\begin{equation}\label{e:projection}
\<v, e_i\> =\begin{cases} \<u, e_i\>+\beta w_i &\mbox{for } i \leq J, \\
(1-\beta^2)^{\frac12}\<u,e_i\>+\beta w_i & \mbox{for } i >J, \end{cases}
\end{equation}
where 
$(w_1,...,w_J)^T\sim N(0,\Sigma)$ 
and $w_i \sim N(0,\alpha_i)$ for $i>J$. 
Eq~\eqref{e:projection} shows the \emph{hybrid} nature of the algorithm:
it performs an AM algorithm in a finite dimensional space spanned by $\{e_1,...,e_J\}$ with the proposal covariance adapted 
to approximate that of the posterior,
and a standard pCN sampler for all $j>J$.
The intuition behind our algorithm is based on the assumption that the (finite-resolution) data is only informative about a finite number 
of KL modes of the prior.  In particular, the data can not provide information about the modes that are highly oscillating (associated with small eigenvalues) and for those modes, the posterior is 
approximately the prior. 
In this case, in the finite dimensional subspace spanned by the
modes that are significantly informed by the data, we shall perform an AM algorithm to improve the sampling efficiency; 
in its complement space, we just use the standard pCN method to preserve the dimensional independence of the MCMC scheme.

Finally an important issue in the implementation is to determine the number of adapted eigenvalues $J$.
Following \cite{hu2015adaptive}, we let $J =\min\{j\in \N\}$ such that, 
\[\frac{\sum_{i=1}^j{\alpha_j}}{\sum_{i=1}^\infty{\alpha_i}}>\rho,\]
where $0<\rho<1$ is a prescribed number (e.g. $\rho=0.9$).
In Section~\ref{sec:examples}, with numerical examples, we demonstrate how the choice of $J$ affects the sampling efficiency of the algorithm. 
\subsection{The convergence property}
It is well known that, the chain constructed with an adaptive MCMC algorithm may not converge to the target distribution, i.e., losing its ergodicity. 
Thus, for a new adaptive algorithm, it is important to study whether it can correctly converge to the target distribution.   
It has been proved by Roberts and Rosenthal \cite{roberts2009examples} that, an adaptive MCMC algorithm
has the correct asymptotic convergence, provided that it satisfies 
the Diminishing  Adaptation~(DA) condition, which, loosely speaking, 
requires the transition probabilities to converge as the iteration proceeds,  
 and the Containment condition.
It has also been suggested by the authors that the Containment condition is often merely a technical condition which is satisfied for virtually all reasonable adaptive schemes~\cite{roberts2009examples},
and thus  here we show that  the proposed hybrid algorithm satisfies  the DA condition.  %under a minor modification: 
Recall that, to show the ApCN algorithm satisfies the DA condition, the likelihood function is modified to be
\[ \frac{d\mu^y}{d\mu_{0}}(u) \propto
\begin{cases}
\exp(-\Phi(z)), &\|u\|_X\leq R_{\max}\cr
0,&\|u\|_X> R_{\max},
\end{cases}
\]
where $R_{\max}$ is a prescribed positive constant. 
%This modification essentially ensures that the support of the posterior is bounded and it has been 
%assumed in the convergence proof of the finite dimensional AM~\cite{haario2001adaptive} and that of our ApCN algorithm~\cite{hu2015adaptive}. 
Removal of this artificial modification is certain desirable, and in what follows we shall show that the present hybrid algorithm satisfies the DA condition, without making such a modification. 
%The purpose of the modification is to simplify our proof here,
%and in practice its impact on the inference results can be negligible, provided that $R$ is sufficiently large.  

Suppose at iteration $n$, we have samples $\{u_0,\,u_1,\,\cdots,\,u_{n-2},\,u\}$ and for simplicity we define the notation: 
$\zeta_{n-2}=(u_0,u_1,\cdots,u_{n-2})$. 
Let $\Sigma_{n}$ be the subspace covariance matrix estimated with $\{u_0,\,u_1,\,\cdots,\,u_{n-2},\,u\}$ using Eq.~\eqref{e:update},   
and $\@C_{n,\zeta_{n-2}}(u)$ be the corresponding proposal covariance operator. 
%For simplicity, we define 
%\[
%\@C_{n,\zeta_{n-2}}(u)=\@C_n(u_0,u_1,\cdots,u_{n-2},u), \quad \mathrm{where}\quad\zeta_{n-2}=(u_0,u_1,\cdots,u_{n-2}).
%\] 
We define $q_{n,\zeta_{n-2}}(u;dv)=\@N(m(u),\beta^2\@C_{n,\zeta_{n-2}}(u))$, i.e., the proposal distribution at iteration $n$, and
\[
Q_{n,\zeta_{n-2}}(u,dv)=a(u,v)q_{n,\zeta_{n-2}}(u,dv)+\delta_u(dv)(1-\int a(u,v')q_{n,\zeta_{n-2}}(u,dv'))
\] 
%as the transition probability at time $n$, 
where $a(\cdot,\cdot)$ is given by Eq.~\eqref{e:accnew2}.
We then have the following theorem (the DA condition): 
\begin{theorem} \label{th:da}
 There is a fixed positive constant $\gamma$ such that 
\[
\sup_{u\in X}\|Q_{n,\zeta_{n-2}}(u,\cdot)-Q_{n+1,\zeta_{n-1}}(u,\cdot)\|\leq\frac{\gamma}{n}
\]
for any $\zeta_{n-1}$ and $\zeta_{n-2}$ such that $\zeta_{n-1}$ is a direct continuation of $\zeta_{n-2}$.  
Here  $\|\cdot\|$ is the total variation norm.
\end{theorem}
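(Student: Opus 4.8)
The plan is to peel off from the transition kernel everything that does not depend on the adapted matrix, reduce the problem to a total-variation distance between two finite-dimensional Gaussians with a common mean, and control that distance through the $O(1/n)$ per-step change of the sample covariance. The crucial observation is that the acceptance probability $a(u,v)$ in \eqref{e:accnew2} does \emph{not} depend on the adapted matrix $\Sigma$: by the Proposition the ratio $d\eta^\bot/d\eta$ is given by \eqref{e:accnew}, which involves only $\Phi$ and the fixed prior eigenvalues $\alpha_1,\dots,\alpha_J$, the matrix $\Sigma$ having cancelled because the proposal restricted to $X^+$ is a symmetric random walk and the one restricted to $X^-$ is a pCN move reversible with respect to $\mu_0|_{X^-}$. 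Hence the only dependence of $Q_{n,\zeta_{n-2}}$ on the sample history is through the Gaussian proposal $q_{n,\zeta_{n-2}}(u,\cdot)=\@N(m(u),\beta^2\@C_{n,\zeta_{n-2}}(u))$, and writing out the definition of $Q$ with $0\le a\le 1$ gives, for every $u\in X$,
\[
\|Q_{n,\zeta_{n-2}}(u,\cdot)-Q_{n+1,\zeta_{n-1}}(u,\cdot)\|\le 2\,\|q_{n,\zeta_{n-2}}(u,\cdot)-q_{n+1,\zeta_{n-1}}(u,\cdot)\|.
\]

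Next I would exploit the block structure of the proposal. Under $q_{n,\zeta_{n-2}}(u,\cdot)$ the components $v^+$ and $v^-$ are independent, with $v^+\sim\@N(u^+,\beta^2\Sigma_n)$ and $v^-\sim\@N((1-\beta^2)^{\frac12}u^-,\beta^2\@B)$; the law of $v^-$ carries no adaptation, so it is identical at steps $n$ and $n+1$. Therefore the total-variation distance on $X$ equals that of the $v^+$-marginals, namely $\|\@N(u^+,\beta^2\Sigma_n)-\@N(u^+,\beta^2\Sigma_{n+1})\|$, a distance between two Gaussians on $\R^J$ with the \emph{same} mean $u^+$. For such Gaussians I would invoke the standard estimate --- via Pinsker's inequality and the closed form of the Kullback--Leibler divergence between Gaussians --- bounding this distance by a constant times the matrix norm $\|\Sigma_n-\Sigma_{n+1}\|$, the constant depending only on $J$, $\beta$, and on two-sided spectral bounds for $\Sigma_n$ and $\Sigma_{n+1}$.

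It then remains to establish two estimates, uniform over $u$ and over the history. First, spectral bounds $\delta I\preceq\Sigma_n\preceq MI$: the lower bound comes from the regularizer $\delta I$ in \eqref{e:sighat}, and the upper bound holds because the norm threshold $R$ forces $\|x_i\|\le\|u_i\|_X<R$ for every retained sample, so $M=M(R,\delta)$ may be chosen independently of $n$ and of $\zeta$. Second, the increment bound $\|\Sigma_n-\Sigma_{n+1}\|\le C/n$: writing the update \eqref{e:update} in its recursive form, incorporating one further retained sample perturbs $\hat x$ by $O(1/n)$ and $\Sigma$ by $O(R^2/n)$ --- this is precisely where boundedness of the retained samples, hence the threshold $R$, is indispensable. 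Combining these with the Gaussian estimate of the previous paragraph and taking the supremum over $u$ produces a constant $\gamma=\gamma(J,\beta,R,\delta)$ as required.

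The main obstacle is the increment bound: one must verify that the per-iteration change of $\Sigma_n$ is genuinely $O(1/n)$ with a constant uniform over \emph{all} admissible pairs $\zeta_{n-2},\zeta_{n-1}$, which requires careful bookkeeping of the case in which a newly accepted state is discarded by the norm threshold --- so that $\Sigma_{n+1}=\Sigma_n$ and the bound is trivial --- against the case in which it is retained and the sample count in \eqref{e:update} advances by one. The remaining ingredients --- the cancellation of $\Sigma$ in the acceptance ratio, the product reduction, and the Gaussian total-variation estimate --- are routine.
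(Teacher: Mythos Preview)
Your proposal is correct and follows essentially the same route as the paper: reduce $\|Q_n-Q_{n+1}\|$ to $\|q_n-q_{n+1}\|$ using that $a(u,v)$ is $\Sigma$-free, collapse the proposal comparison to a $J$-dimensional Gaussian total-variation distance, bound that by $c\|\Sigma_n-\Sigma_{n+1}\|$, and finish with the $O(1/n)$ recursion estimate (trivial when the new state is thresholded out, and the Haario--Saksman--Tamminen argument otherwise). The only cosmetic differences are that the paper writes out the Radon--Nikodym derivative $dq_n/dq_{n+1}$ explicitly and bounds the resulting $L^1$ integral directly, whereas you invoke the product structure and Pinsker/KL; your version is in fact slightly more careful in that it names the two-sided spectral bounds $\delta I\preceq\Sigma_n\preceq M(R,\delta)I$ needed to make the constant in the Gaussian estimate uniform, a point the paper leaves implicit in its constant $c_1$.
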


%We note that, the theorem can be proved following the same idea of the proof of Theorem 2 in \cite{hu2015adaptive} with some very minor different
%technical details,
%and so we omit the proof here. 

\begin{proof}
First it is easy to see that $q_{n,\zeta_{n-2}}(u;\cdot)$ and $q_{n+1,\zeta_{n-1}}(u;\cdot)$ are both Gaussian measures with same mean,
and we have
\begin{equation}
\frac{dq_{n,\zeta_{n-2}}(u;\cdot)}{dq_{n+1,\zeta_{n-1}}(u;\cdot)}(v) = \sqrt{\dfrac{|\Sigma_{n+1}|}{|\Sigma_{n}|}}\exp(\frac12\Delta x^T
(\Sigma_{n+1}^{-1}-\Sigma_n^{-1})\Delta x),
\end{equation}
where $\Delta x  = (\<v-u,e_1\>,...\<v-u,e_J\>)^T$. Let $A$ be any member of the $\sigma$-field of $X$, and we compute
\[
\begin{array}{ll}
&|Q_{n,\zeta_{n-2}}(u;A)-Q_{n+1,\zeta_{n-1}}(u;A)|\\
&=|\int_A a(u,v)q_{n,\zeta_{n-2}}(u;dv)+\delta_A(u)(1-\int_{X}{q_{n,\zeta_{n-2}}(u;dv')a(u,v')})\\
&-\int_A a(u,v)q_{n+1,\zeta_{n-1}}(u;dv)+\delta_A(u)(1-\int_{X}{q_{n+1,\zeta_{n-1}}(u;dv')a(u,v')})|\\
&\leq 2\int_X a(u,v)|\frac{dq_{n,\zeta_{n-2}}(u;\cdot)}{dq_{n+1,\zeta_{n-1}}(u;\cdot)}(v)-1|q_{n+1,\zeta_{n-1}}(u;dv)\\
&\leq 2\int_X |\frac{dq_{n,\zeta_{n-2}}(u;\cdot)}{dq_{n+1,\zeta_{n-1}}(u;\cdot)}(v)-1|q_{n+1,\zeta_{n-1}}(u;dv),\\
&= 2\int_X |\sqrt{\dfrac{|\Sigma_{n+1}|}{|\Sigma_{n}|}}
\exp(\frac12\Delta x^T(\Sigma_{n+1}^{-1}-\Sigma_n^{-1})\Delta x)-1|q_{n+1,\zeta_{n-1}}(u;dv),\\
&=\dfrac2{(2\pi)^{\frac{J}2}}\int_{\mathbb{R}^J}|\dfrac{1}{\sqrt{|\Sigma_{n}|}}\exp(-\frac12\Delta x^T\Sigma^{-1}_n\Delta x)-
\dfrac{1}{\sqrt{|\Sigma_{n+1}|}}\exp(-\frac12\Delta x^T\Sigma^{-1}_{n+1}\Delta x)|d\Delta x\\
&\leq c_1 \|\Sigma_{n} - \Sigma_{n+1}\|,
\end{array}
\]
for some constant $c_1>0$. 
If $\|u_{n}\|_X>R$, $\|\Sigma_n-\Sigma_{n-1}\|=0$; otherwise, following the same argument of the proof of Theorem 2 in \cite{haario2001adaptive}, 
we can show that $\|\Sigma_n-\Sigma_{n-1}\|\leq c_2/n $ for some constant $c_2>0$. It follows directly that the theorem holds. 
\end{proof}

\section{Numerical examples}\label{sec:examples}

\subsection{A Gaussian example}
Intuitively, we expect that the proposed hybrid method should be advantageous over ApCN in problems where the the eigenmodes are strongly correlated. 
To test this property, we construct a simple mathematical example. 
We assume the unknown is a function defined on the interval $[0,\,1]$, and the prior is taken to be a zero mean Gaussian with Mat\'ern covariance~\cite{rasmussen2006gaussian}:
\begin{equation}
K(t_1,t_2) = \sigma^2 \frac{2^{1-\nu}}{\mathrm{Gam}(\nu)}(\sqrt{2\nu}\frac{d}{l})^\nu B_\nu(\sqrt{2\nu}\frac{d}{l}), \label{e:matern}
\end{equation}
where $d=|t_1-t_2|$, $\mathrm{Gam}(\cdot)$ is the Gamma function, and $B_\nu(\cdot)$ is the modified Bessel function. 
A random function with the Mat\'ern covariance is $[\nu-1]$ mean square (MS) differentiable, and here we choose $\nu=5/2$ implying second order MS differentiability.
Moreover, we set $\sigma=1$ and $l=1$ in this example. 
We take the function $\Phi(u)$ to be 
\[ \Phi(u) = \frac12 x^T\,\Gamma \, x \]
where $x = (\<u,e_1\>,...\<u,e_K\>)^T$ for a positive integer $K$ and $\Gamma[i,j] = \exp(-(i-j)^2/\Delta)$ for $i,\,j = 1...K$ and a constant $\Delta>0$. 
In this example we choose $K=14$. 
It is easy to see that the posterior distribution is also Gaussian, and
by choosing different value of $\Delta$ we can control the posterior correlation between the eigenmodes. 
In particular, we perform numerical tests for the two cases: $\Delta=1$ (weak correlation) and $\Delta =14$ (strong correlation).  
In each case, we sample the posterior distribution with three methods: the standard pCN, ApCN, and the hybrid method. 
For the ApCN and the hybrid methods, we draw $5\times10^5$ samples with another $0.5\times10^5$ pCN samples used in the pre-run,
and for the pCN method, we directly draw $5.5\times10^5$ samples. Moreover, we set $J=14$ in both the ApCN and the hybrid methods.   
We note that, in all the numerical tests performed in this work, unless otherwisely stated, the unknown is represented with 201 grid points and the stepsize $\beta$ has been chosen in a way that the resulting acceptance probability is around $25\%$. 

We first show the results for $\Delta =1$.
In Fig.~\ref{f:acf_gauss1}, we plot the autocorrelation function (ACF) of the samples drawn by each method against the lag at $t=0.4$ and $t=0.8$. 
We then compute the ACF of lag $100$ at all the grid points, and show the results in Fig.~\ref{f:acf100-ess-gauss1} (left).  
%and we can see that, the ACF of the chain generated by the hybrid method is clearly lower than that of the standard pCN and the ApCN at all the grid points. 
The effective sample size (ESS) is another popular measure of the sampling efficiency of MCMC~\cite{Kass1998}, which gives an estimate of the number of effectively independent draws in the chain.
We compute the ESS per 100 samples of the unknown $u$ at each grid point and show the results in Fig.~\ref{f:acf100-ess-gauss1} (right).
We then show the same plots for $\Delta =14$ in Figs.~\ref{f:acf_gauss14} and \ref{f:acf100-ess-gauss14}. 
We can see from these plots that, in the weakly correlated case $\Delta=1$, the hybrid method is not clearly advantageous over the ApCN algorithm;
 in the strongly correlated case $\Delta=14$, the hybrid method performs much better than the ApCN algorithm, suggesting that
taking the covariances between eigenmodes into account can significantly improve the sampling efficiency in this case. 
These results agree well with our expectations.

\begin{figure}
\centerline{\includegraphics[width=.5\textwidth]{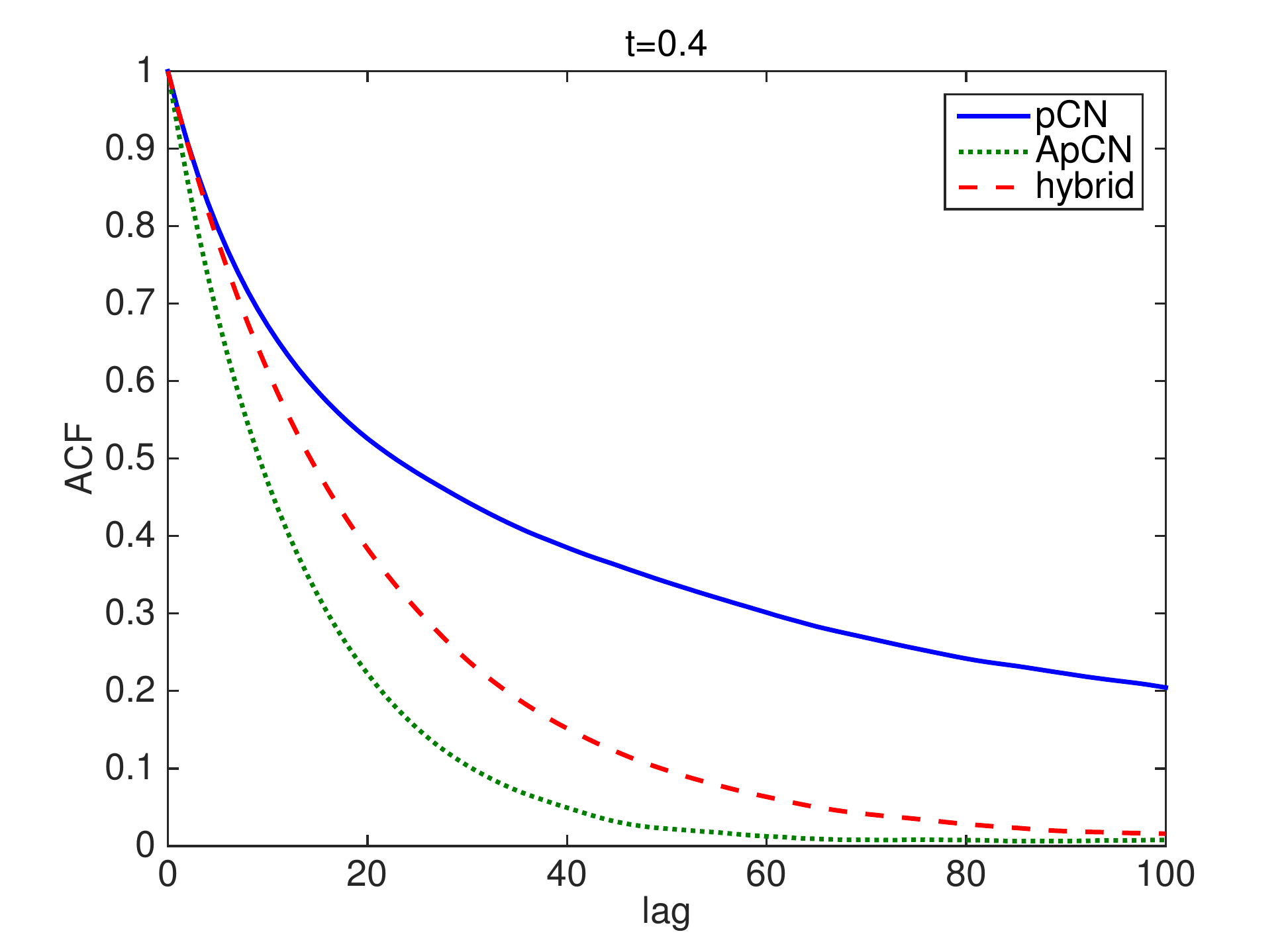}
\includegraphics[width=.5\textwidth]{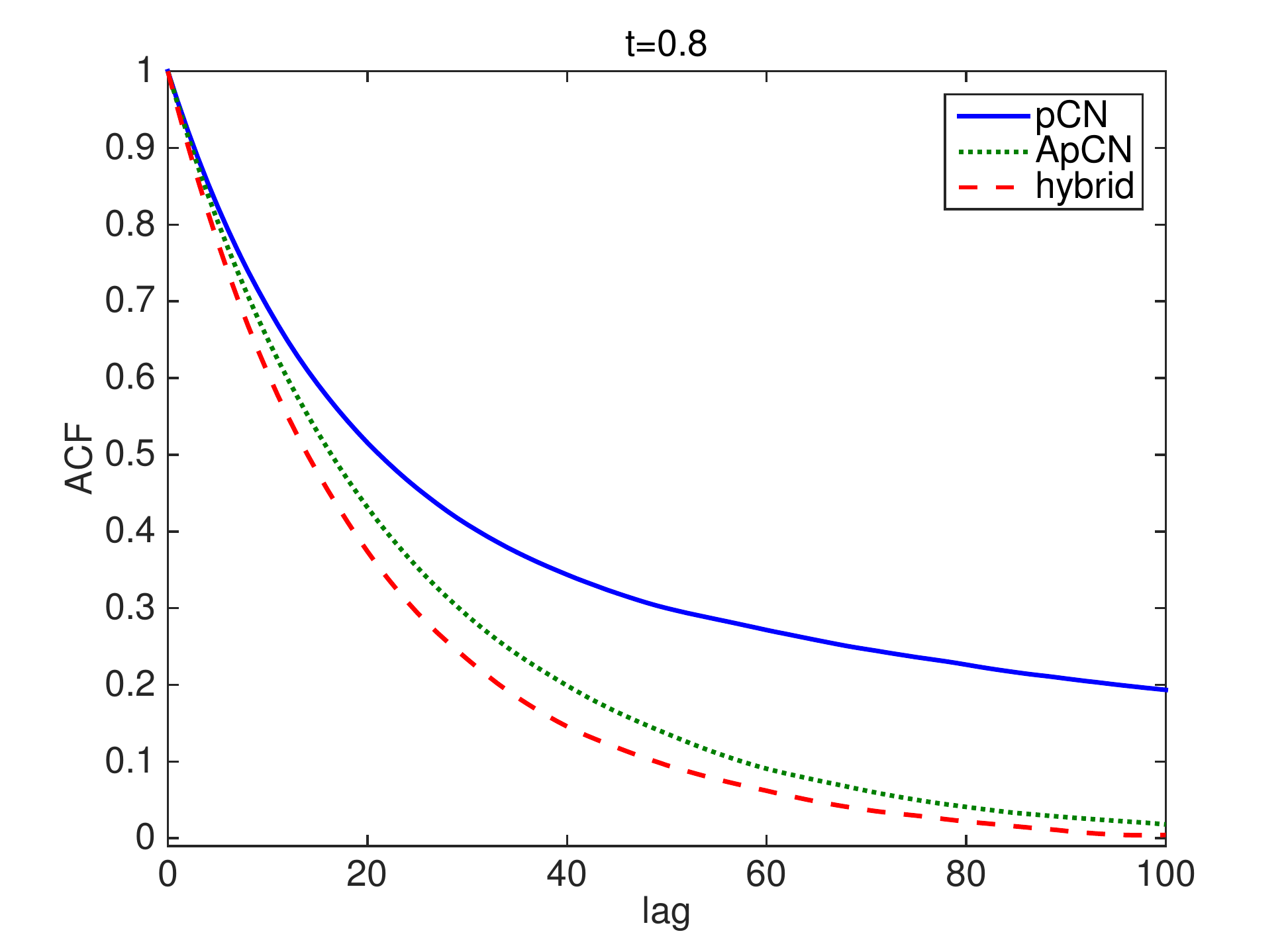}}
\caption{(for the Gaussian example: $\Delta=1$) ACF for the chains drawn by the pCN, the ApCN and the hybrid methods at $t=0.4$ and $t=0.8$.}
\label{f:acf_gauss1}
\end{figure}

\begin{figure}
\centerline{\includegraphics[width=.5\textwidth]{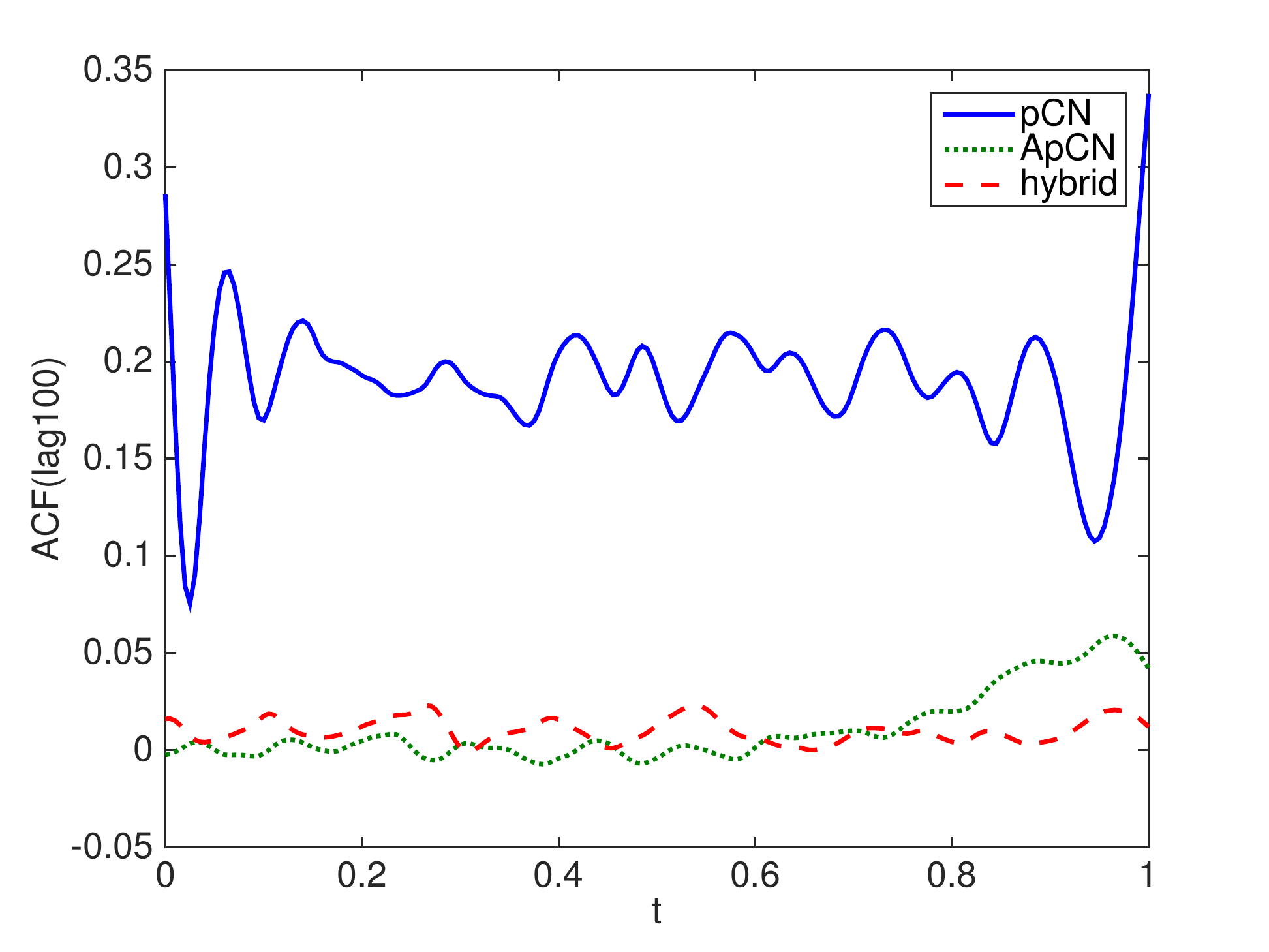}
\includegraphics[width=.5\textwidth]{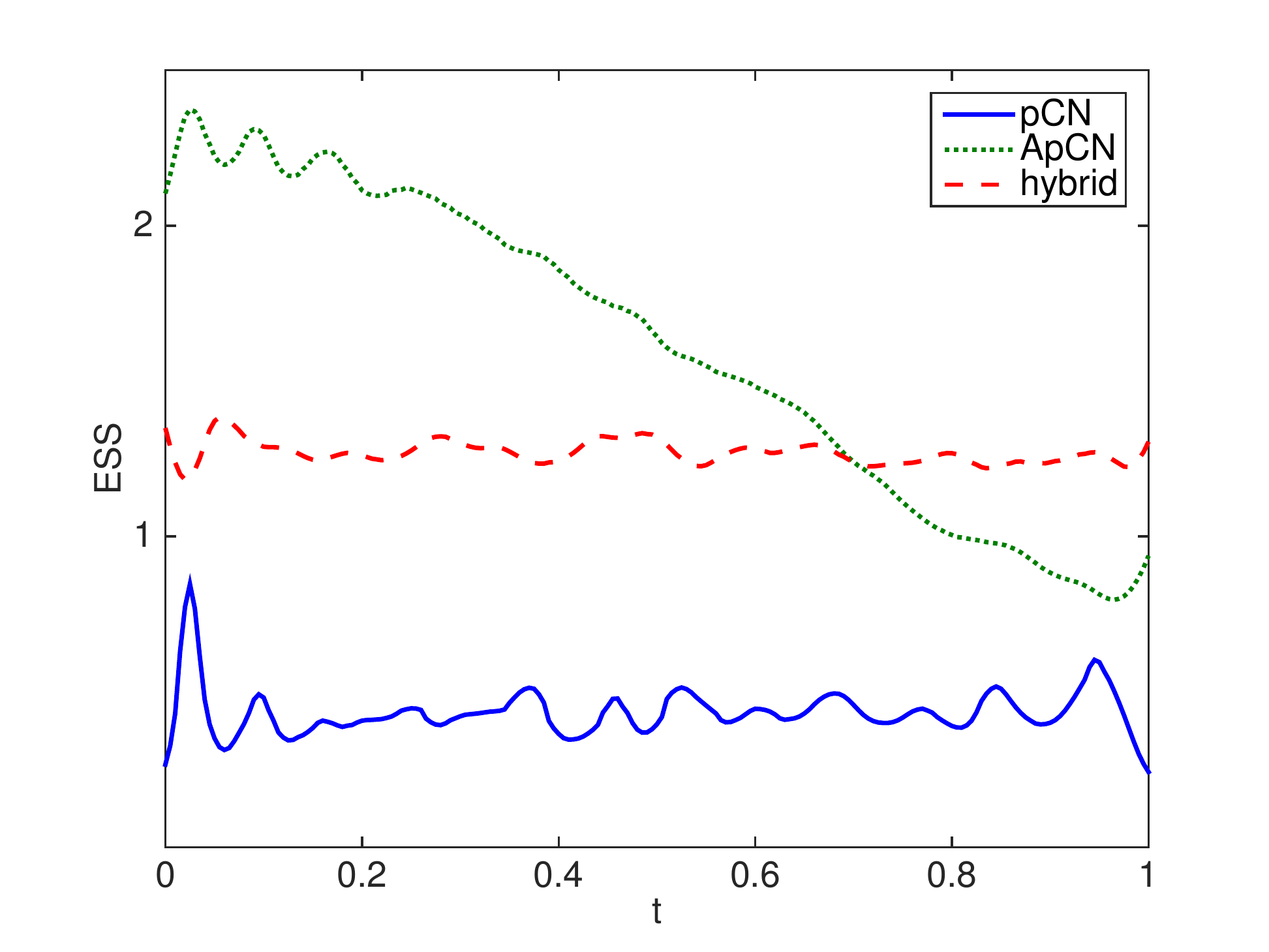}}
\caption{(for the Gaussian example: $\Delta=1$) Left: ACF (lag 100) at each grid point. Right: ESS per 100 samples at each grid point.}
\label{f:acf100-ess-gauss1}
\end{figure}

\begin{figure}
\centerline{\includegraphics[width=.5\textwidth]{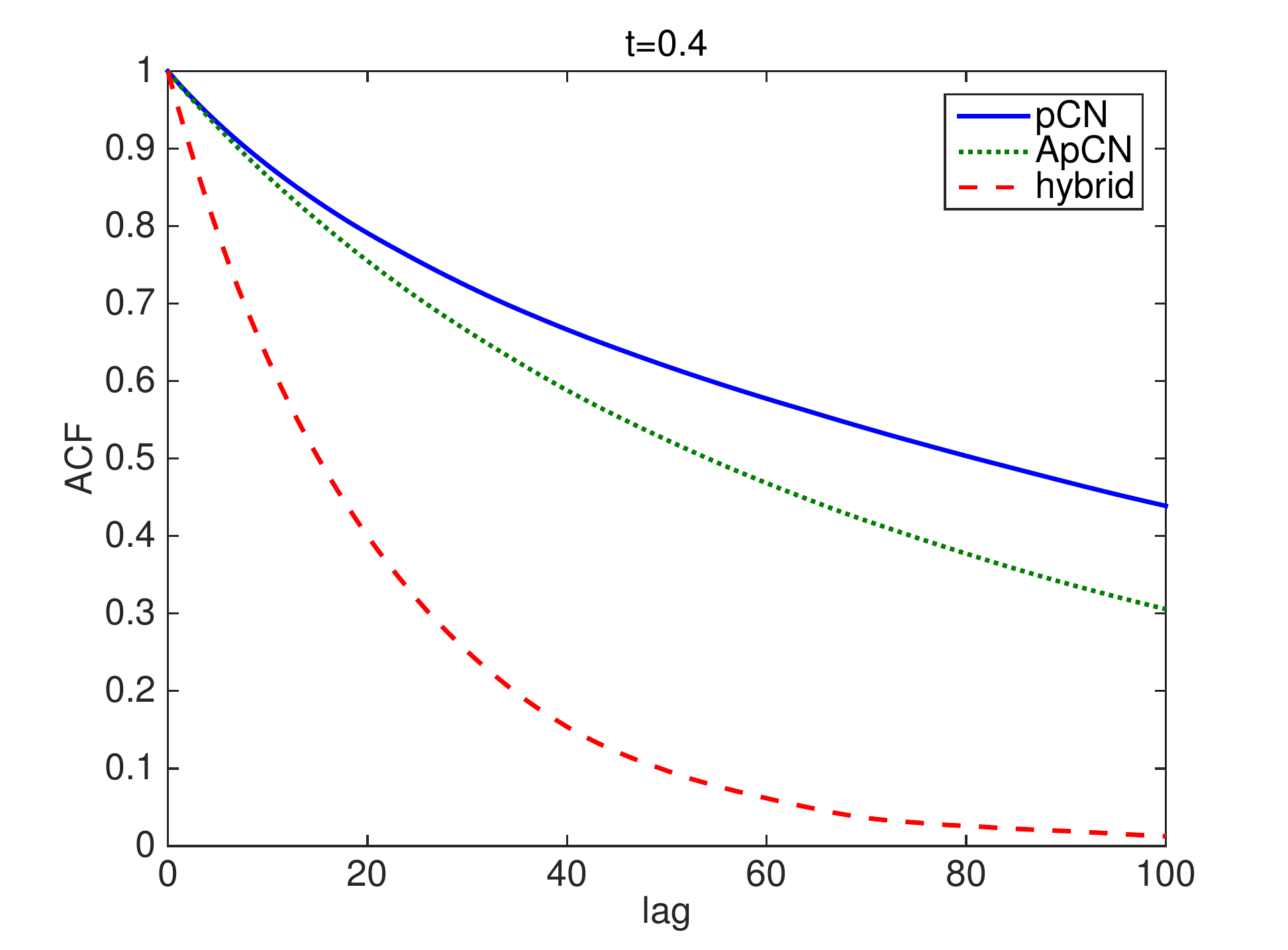}
\includegraphics[width=.5\textwidth]{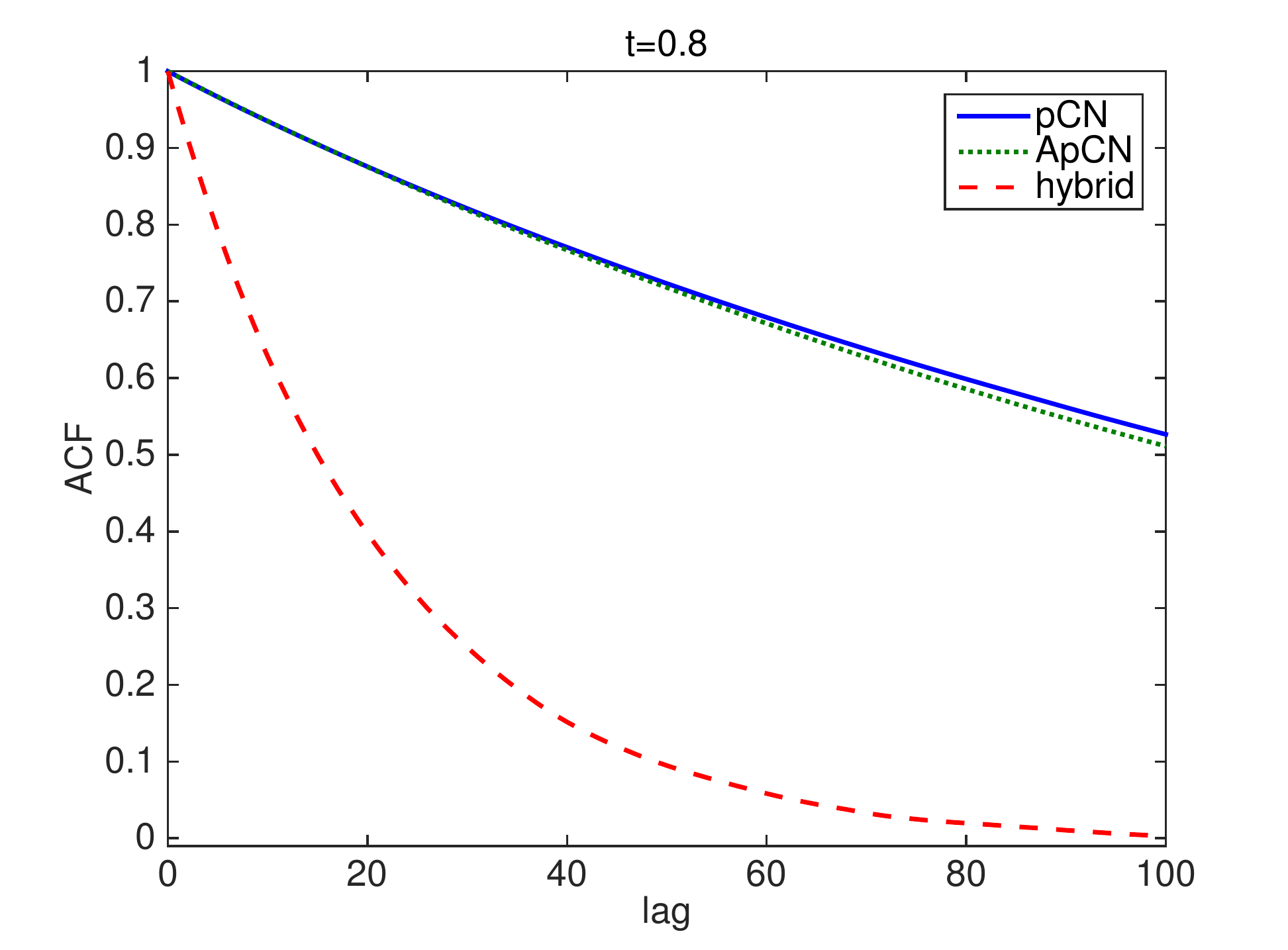}}
\caption{(for the Gaussian example: $\Delta=14$) ACF for the chains drawn by the pCN, the ApCN and the hybrid methods at $t=0.4$ and $t=0.8$.}
\label{f:acf_gauss14}
\end{figure}

\begin{figure}
\centerline{\includegraphics[width=.5\textwidth]{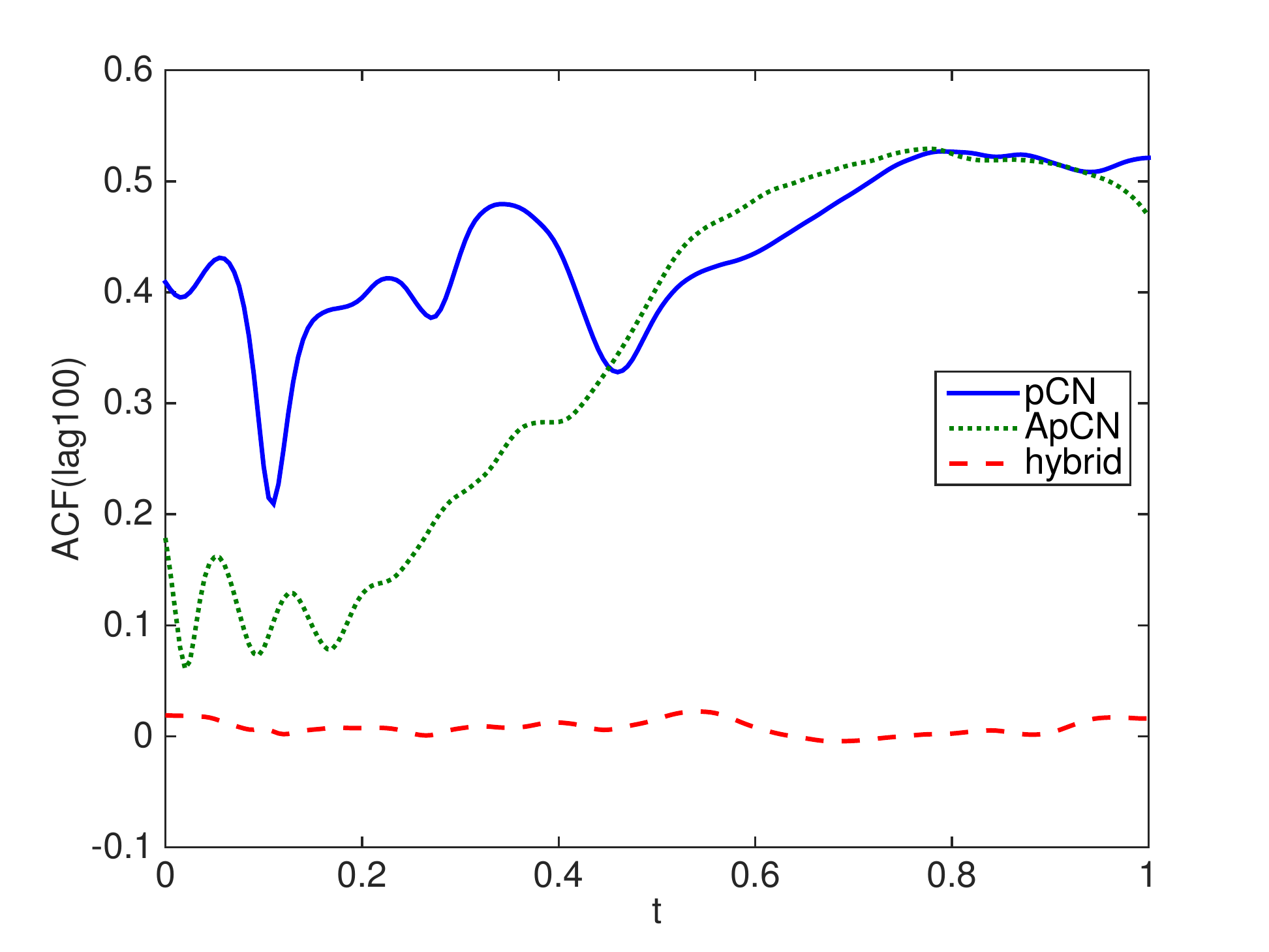}
\includegraphics[width=.5\textwidth]{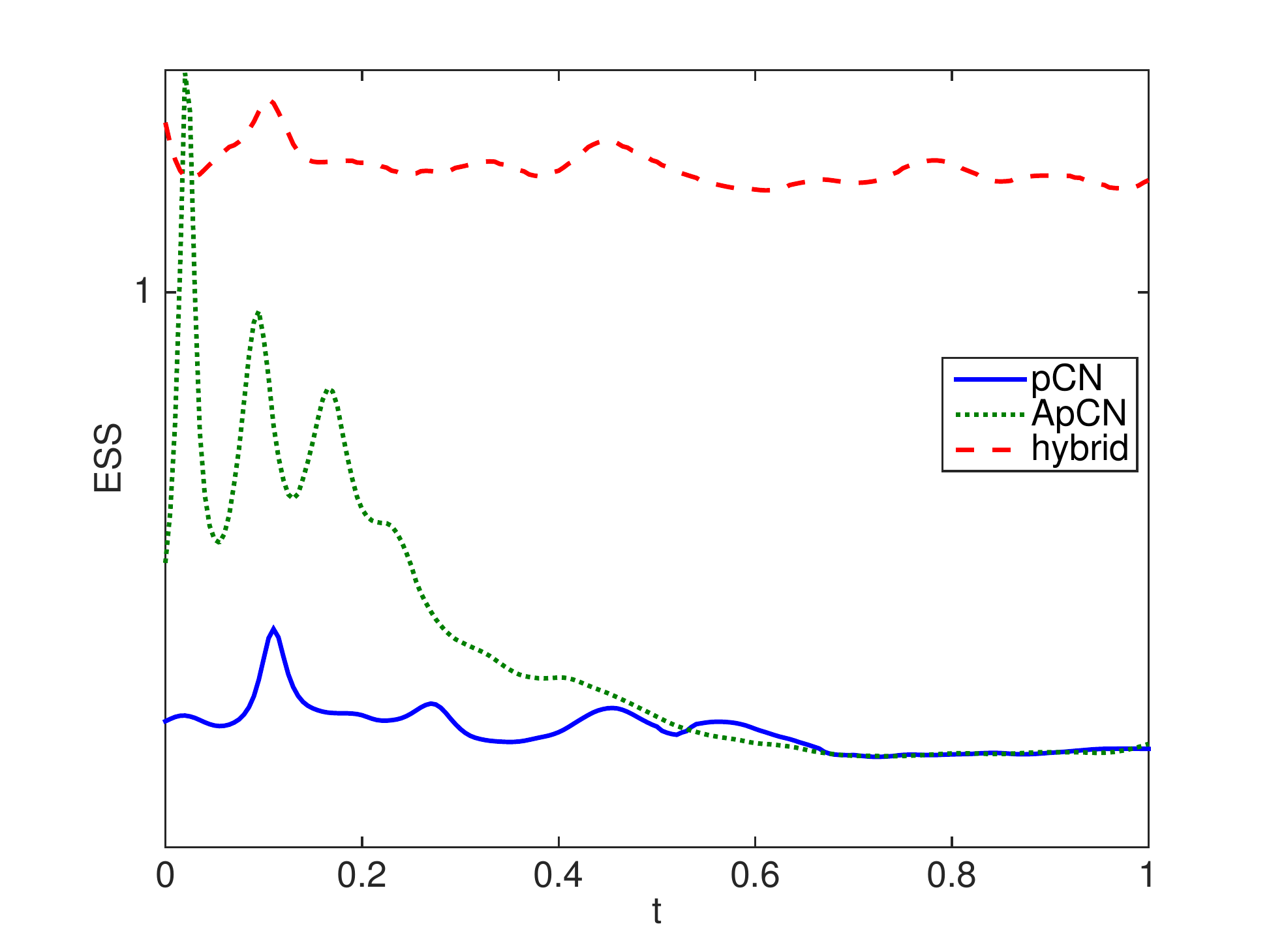}}
\caption{(for the Gaussian example: $\Delta=14$) Left: ACF (lag 100) at each grid point. Right: ESS per 100 samples at each grid point.}
\label{f:acf100-ess-gauss14}
\end{figure}

\subsection{An ODE example} Our second example is an inverse problem where the forward model is governed by an ordinary differential equation (ODE):
\[\frac{\partial x(t)}{\partial t} = -u(t)x(t) \]
with a prescribed initial condition. Suppose that we observe the solution $x(t)$ several times in the interval $[0,T]$, and we want to infer the unknown coefficient 
$u(t)$ from the observed data.
In our numerical experiments, we let the initial condition be $x(0) = 1$ and $T = 1$. Now suppose that the solution is measured every $T/50$ time unit from $0$ to $T$ and the error in each measurement is assumed to be an independent Gaussian $N(0,0.1^2)$. 
The prior is taken to be a zero mean Gaussian with covariance specified by Eq.~\eqref{e:matern}. 

First we want to compare the performance of the hybrid method with that of the ApCN method introduced in \cite{hu2015adaptive}, and so we use the same problem setup as is in \cite{hu2015adaptive}:
 we choose  $l=1$, $\sigma=1$.
We also use the same true coefficient $u(t)$ and
synthetic data $x(t)$ as those in \cite{hu2015adaptive},  which are shown in Figs.~\ref{f:data_ode}. 
% generated by applying the forward model to a true coefficient $u$ and then adding noise to the result. The true coefficient is randomly drawn from the prior distribution. 
%Both the truth and the simulated data are shown in Fig.~\ref{f:data_ode}.
We draw samples from the posterior with three methods: pCN, ApCN and the hybrid algorithm.
In both ApCN and the hybrid algorithm, we use $5\times10^5$ samples with additional $0.5\times10^5$ pCN samples used in the pre-run, and in the standard pCN we directly draw $5.5\times10^{5}$ samples.  
In both the ApCN and the hybrid methods, we follow \cite{hu2015adaptive}, 
%We set the stepsize $\beta = 1/5$,
and choose $J=14$, i.e., $14$ eigenvalues being adapted. 
Since the inference results have been reported in \cite{hu2015adaptive}, we omit them here and only compare
the performance of the three methods.
In Fig.~\ref{f:acf_ode}, we plot the ACF of the samples drawn by each method against the lag.
The results indicate that the ACF of both adaptive algorithms decay faster than the standard pCN, while the ACF of the hybrid algorithm 
decays faster than that of the ApCN. 
We then compute the ACF of lag $100$ at all the grid points, and show the results in Fig.~\ref{f:acf100-ess-ode} (left),  
and we can see that, the ACF of the chain generated by the hybrid method is clearly lower than that of the standard pCN and the ApCN at all the grid points. 
We compute the ESS per 100 samples of the unknown $u$ at each grid point and show the results in Fig.~\ref{f:acf100-ess-ode} (right).
The results show that the hybrid algorithm  produces much more effectively independent samples than  pCN and ApCN. 
In summary, with this example, we show that the proposed hybrid adaptive method performs better than both the standard pCN and the ApCN methods. 

\begin{figure}
\centerline{\includegraphics[width=.5\textwidth]{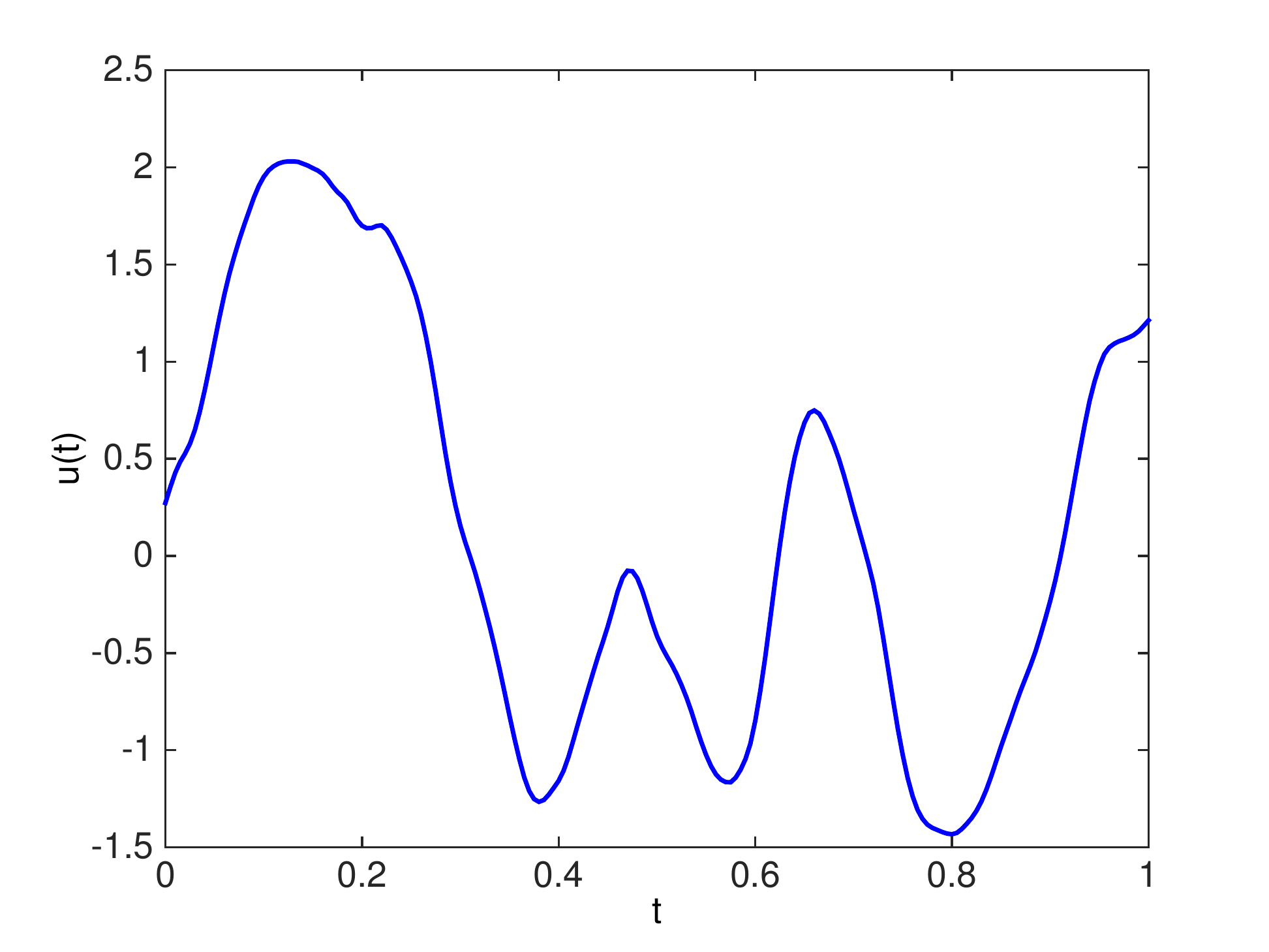}
\includegraphics[width=.5\textwidth]{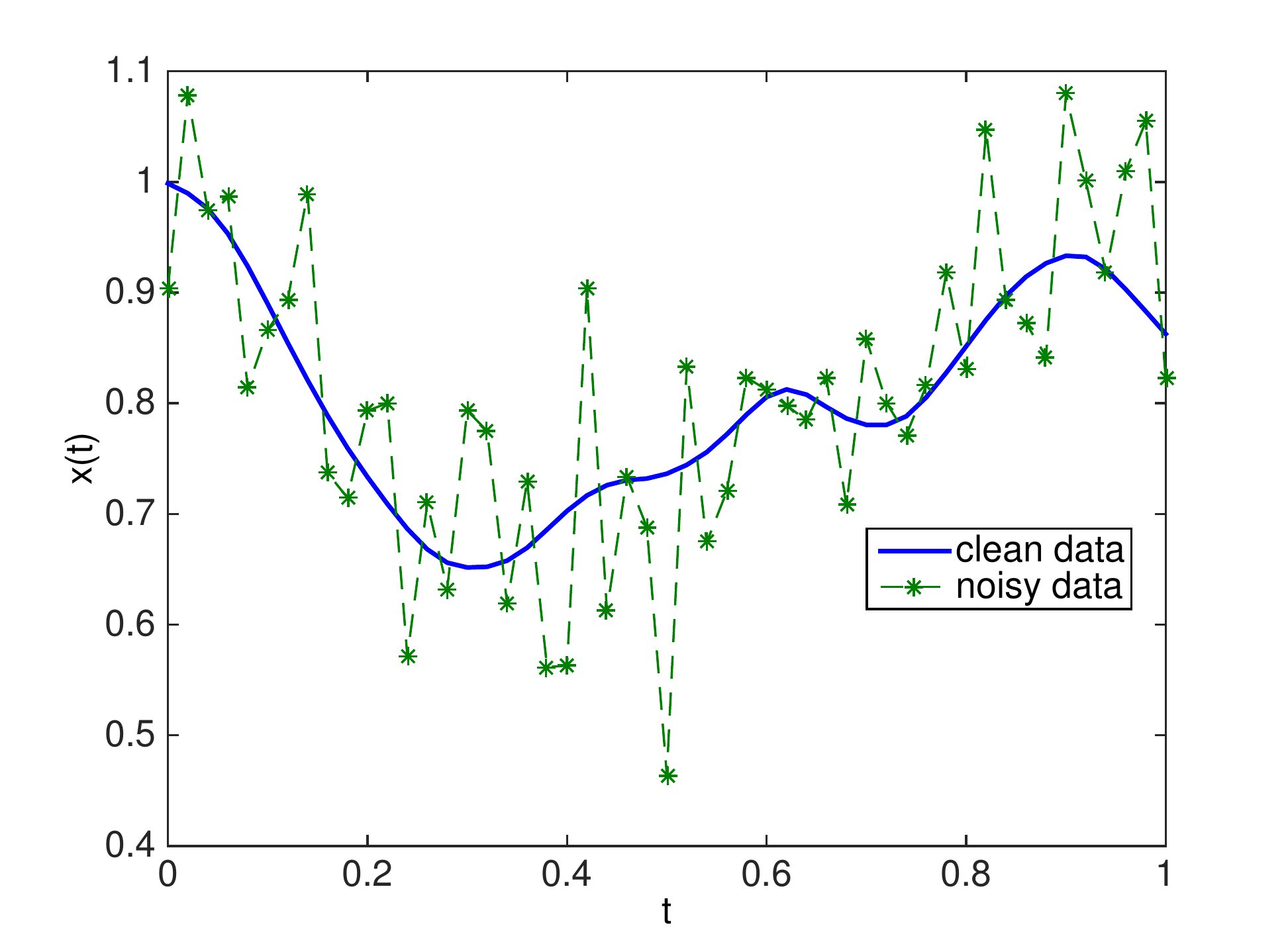}}
\caption{(for the ODE example: test 1) The truth (Left) and the data simulated with it (Right).}
\label{f:data_ode}
\end{figure}

\begin{figure}
\centerline{\includegraphics[width=.5\textwidth]{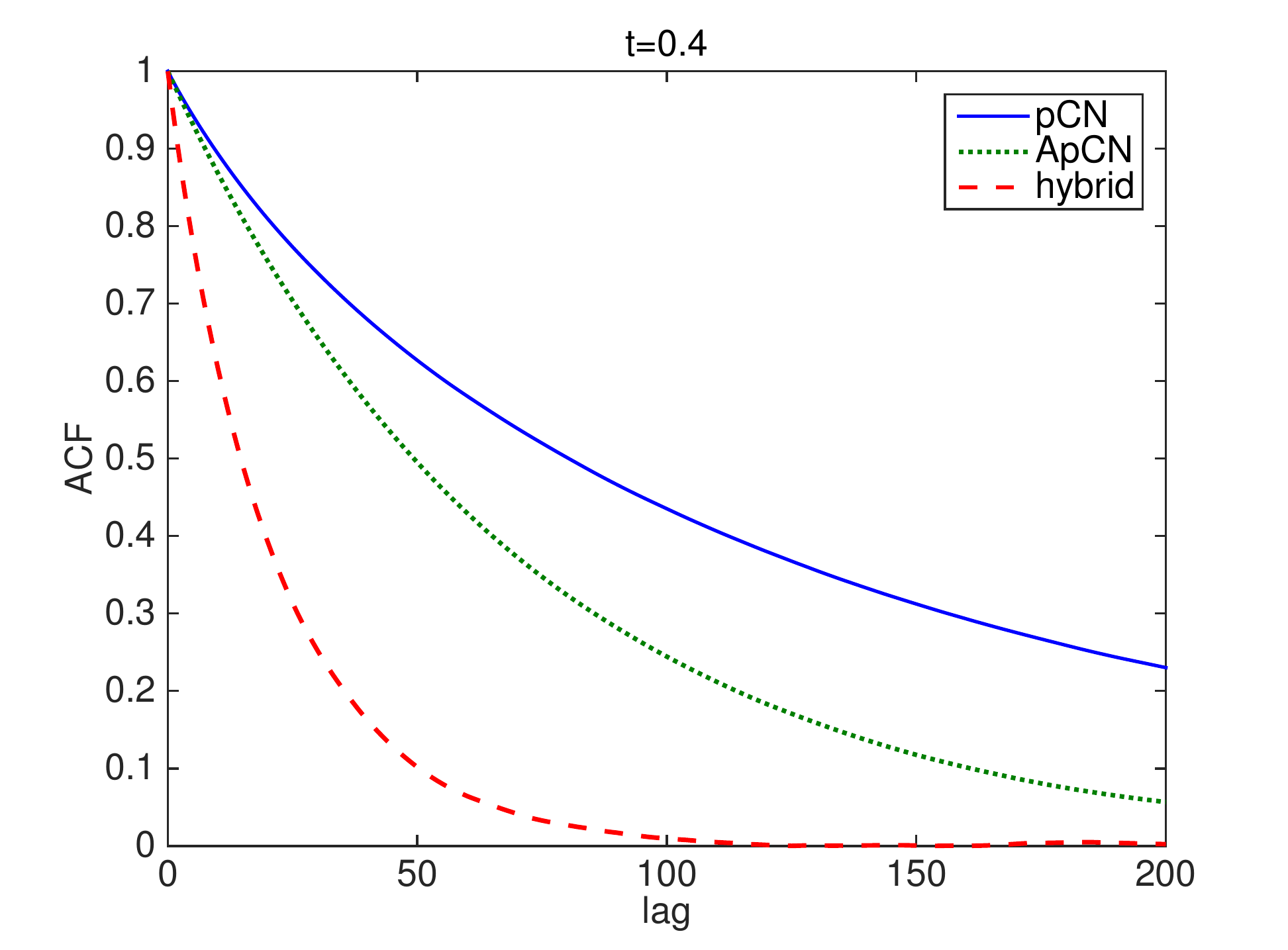}
\includegraphics[width=.5\textwidth]{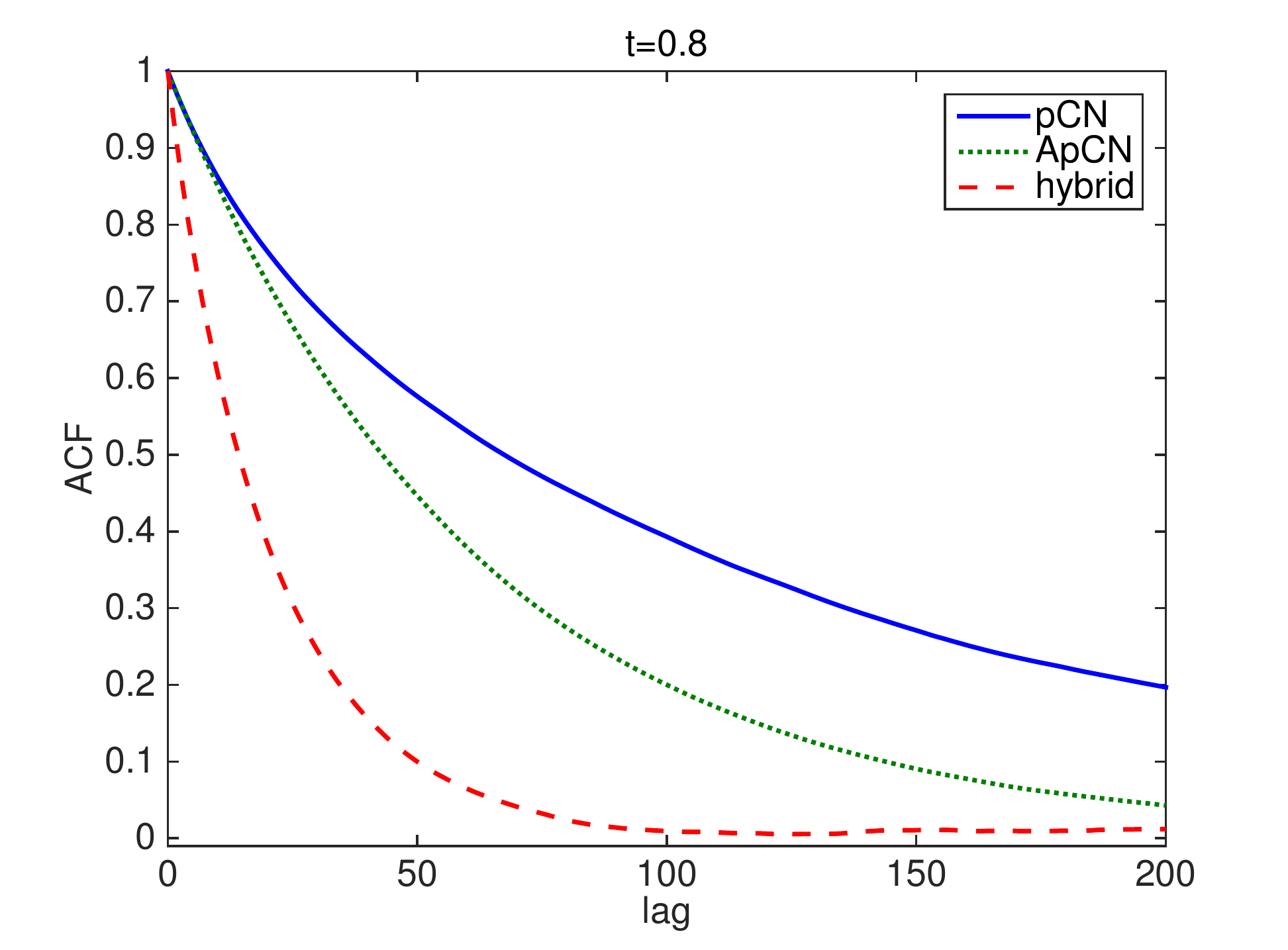}}
\caption{(for the ODE example: test 1) ACF for the chains drawn by the pCN, the ApCN and the hybrid methods at $t=0.4$ and $t=0.8$.}
\label{f:acf_ode}
\end{figure}

\begin{figure}
\centerline{\includegraphics[width=.5\textwidth]{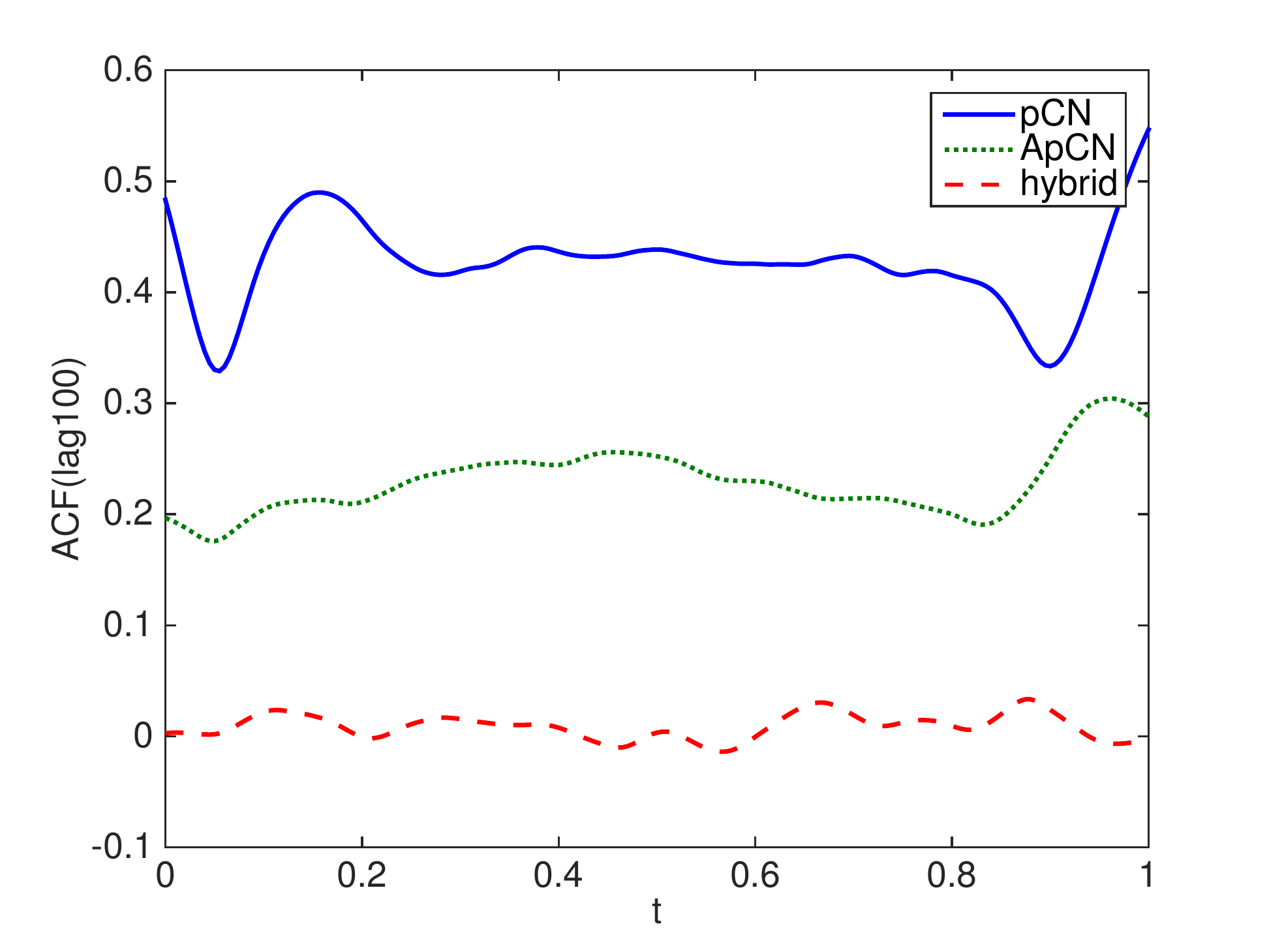}
\includegraphics[width=.5\textwidth]{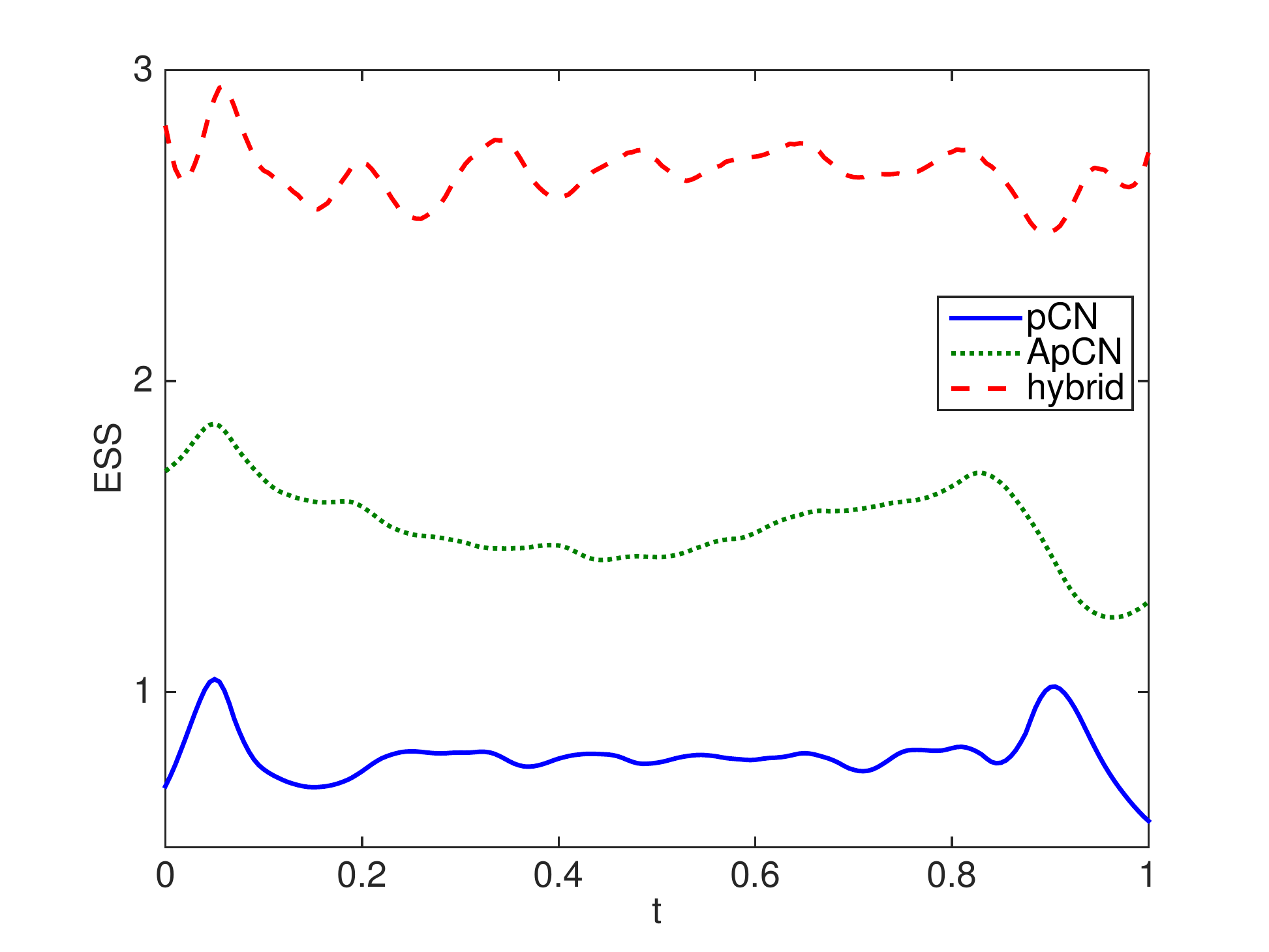}}
\caption{(for the ODE example: test 1) Left: ACF (lag 100) at each grid point. Right: ESS per 100 samples at each grid point.}
\label{f:acf100-ess-ode}
\end{figure}

Next we use the example to test how the value of $J$ affects the sampling efficiency. 
We choose  $l=0.2$, $\sigma=1$. A ``truth'' $u(t)$ is randomly generated from the prior distribution and the
synthetic data $x(t)$ is
 generated by applying the forward model to the generated coefficient $u$ and then adding noise to the result.  
Both the simulated data and the truth are shown in Fig.~\ref{f:data_J}.
It can be seen that the true coefficient and the prior are much ``rougher'' than those in the previous test.
We perform the hybrid algorithm with three different values of $J$: $J=5$, $J=10$ and $J=20$, each with
$5\times10^{5}$  plus $0.5\times10^5$ pCN (pre-run) samples.
As a comparison, we also perform a standard pCN with $5.5\times10^4$ samples. 
The posterior mean is shown in Fig.~\ref{f:data_J} (right). 
We plot the ACF as a function of lag at $t=0.4$ and $t=0.8$ for all the results in Figs~\ref{f:acf_J}.
In Figs.~\ref{f:acf100_J} we plot the ACF of lag 100 as well as the ESS at all the grid points.
One can see from the plots that, the algorithm with $J=10$ yields the best results, suggesting 
that $J=10$ may be sufficient for this problem and $J=20$ may be too large for the given number of samples. 
Nevertheless, in all the cases, the hybrid method performs better than the standard pCN.

\begin{figure}
\centerline{%\includegraphics[width=.5\textwidth]{figs/truth_J}
\includegraphics[width=.5\textwidth]{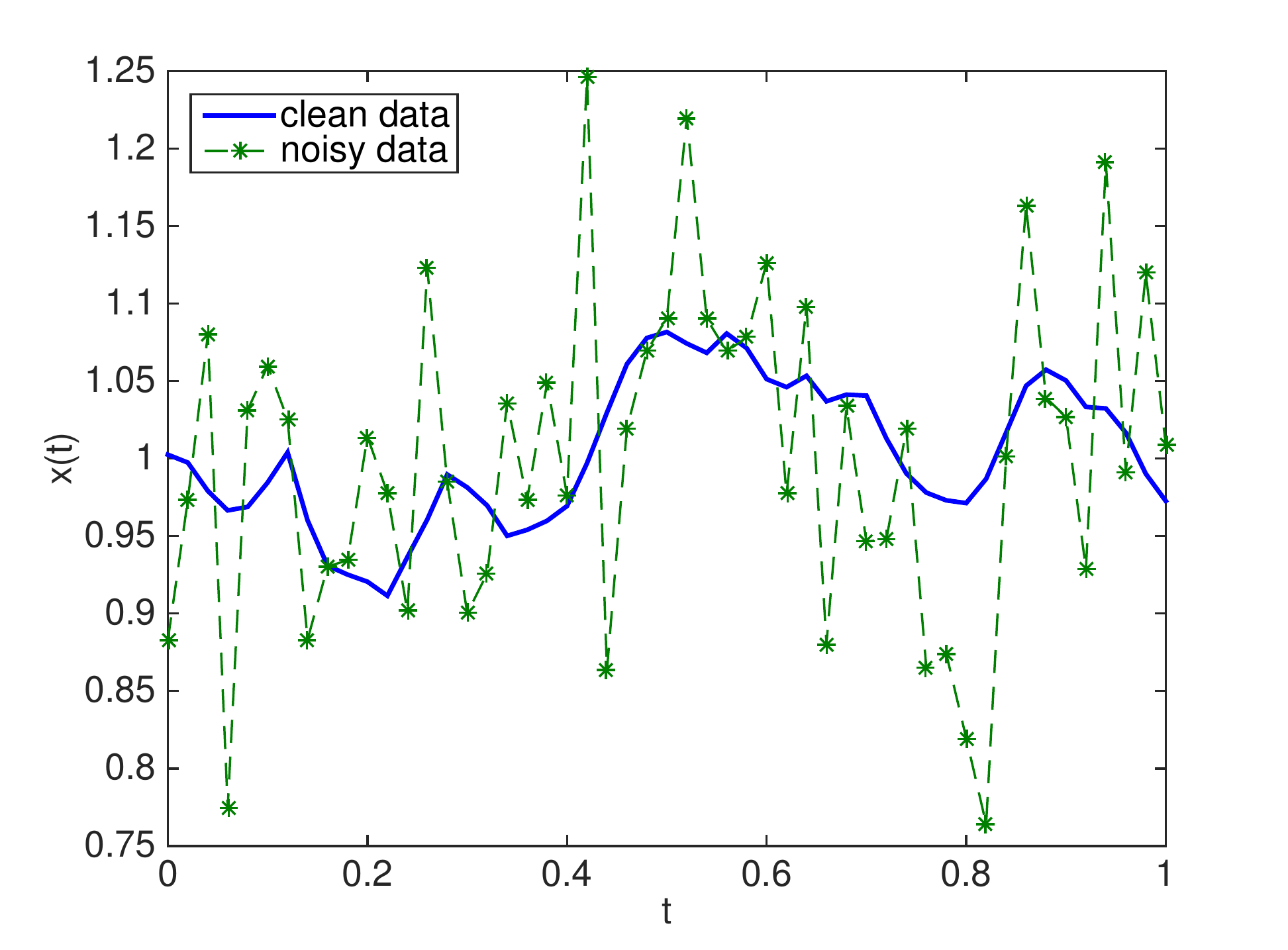}
\includegraphics[width=.5\textwidth]{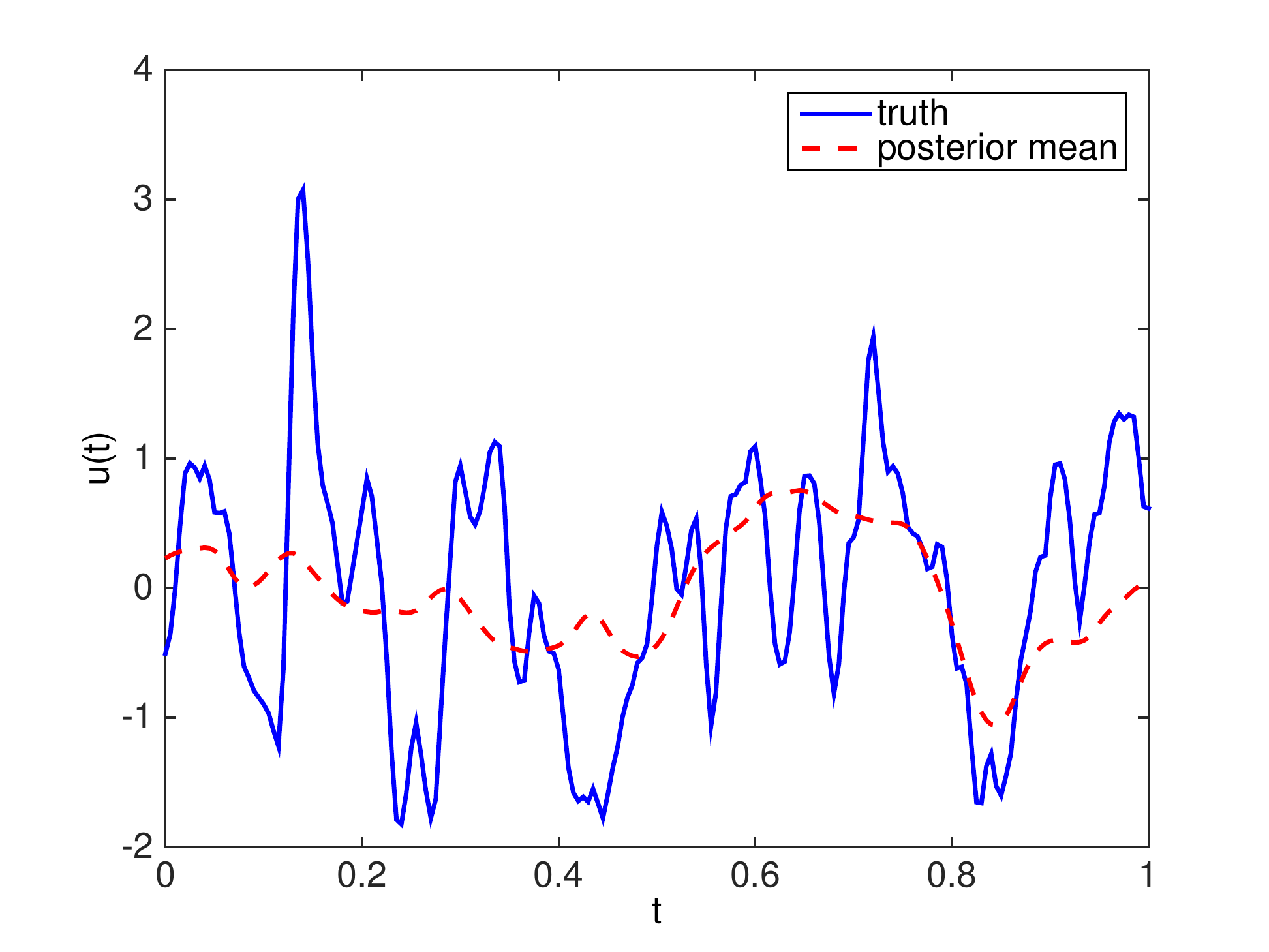}}
\caption{(for the ODE example: test 2) Left: the simulated data. Right: the posterior mean (dashed) compared to the truth (solid).}
\label{f:data_J}
\end{figure}

%\begin{figure}
%\centerline{\includegraphics[width=.5\textwidth]{figs/postmean_J}}
%\caption{(for the ODE example) The posterior mean (dashed) compared to the truth (solid).}
%\label{f:postmean_J}
%\end{figure}

\begin{figure}
\centerline{\includegraphics[width=.5\textwidth]{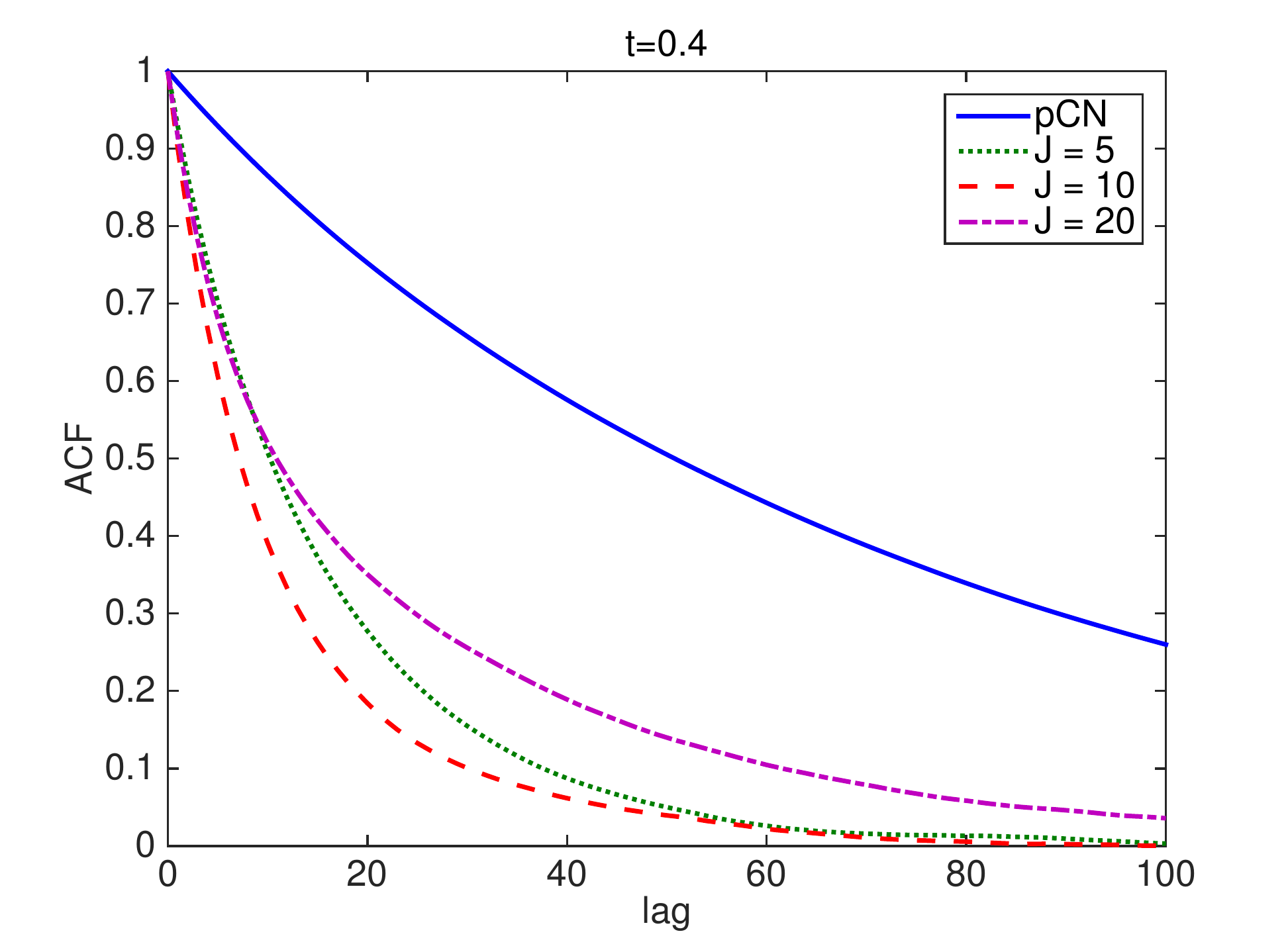}
\includegraphics[width=.5\textwidth]{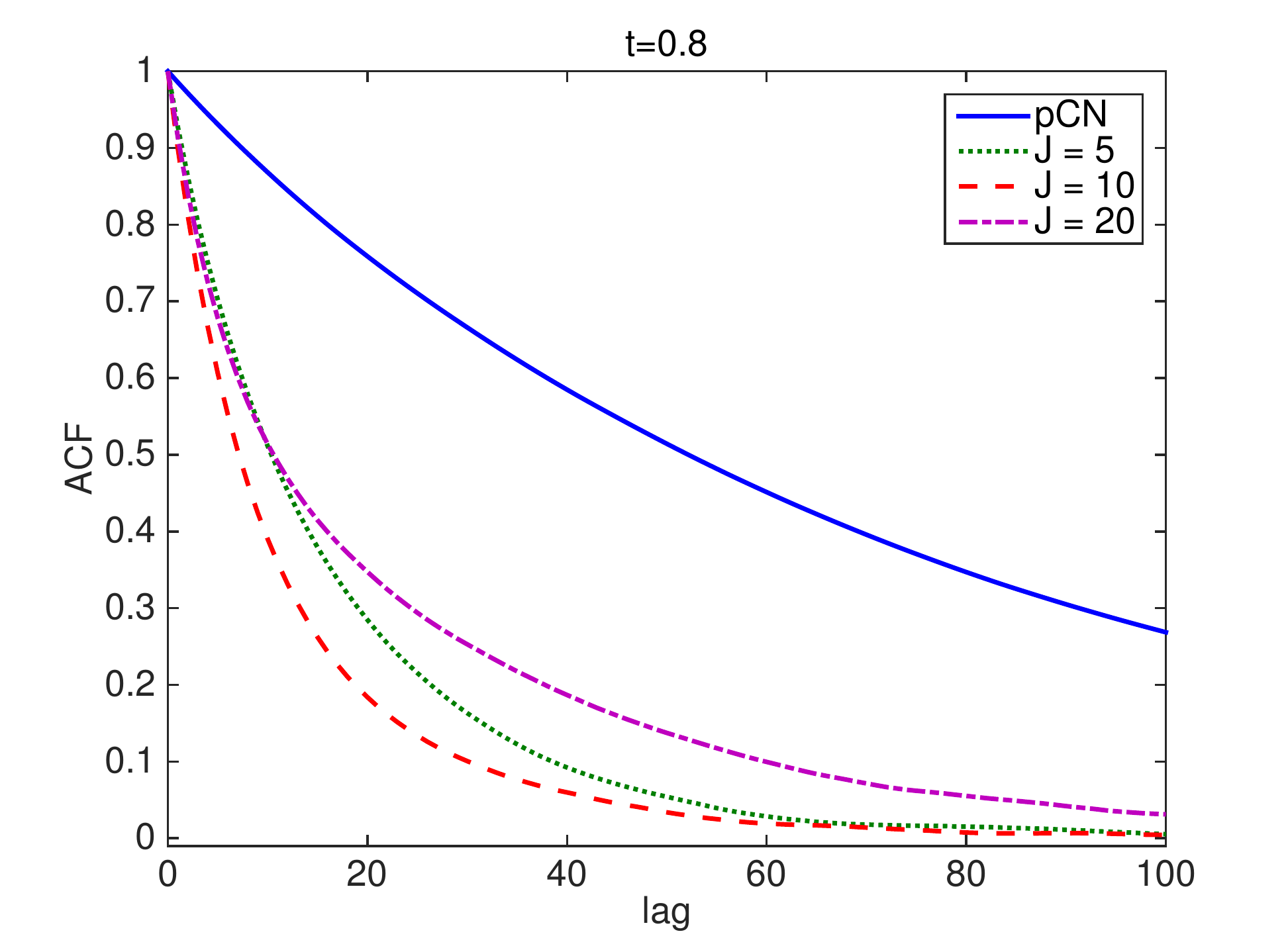}}
\caption{(for the ODE example: test 2) ACF for the chains drawn by the hybrid method with $J=5,\,10,\,20$ at $t=0.4$ and $t=0.8$.}
\label{f:acf_J}
\end{figure}

\begin{figure}
\centerline{\includegraphics[width=.5\textwidth]{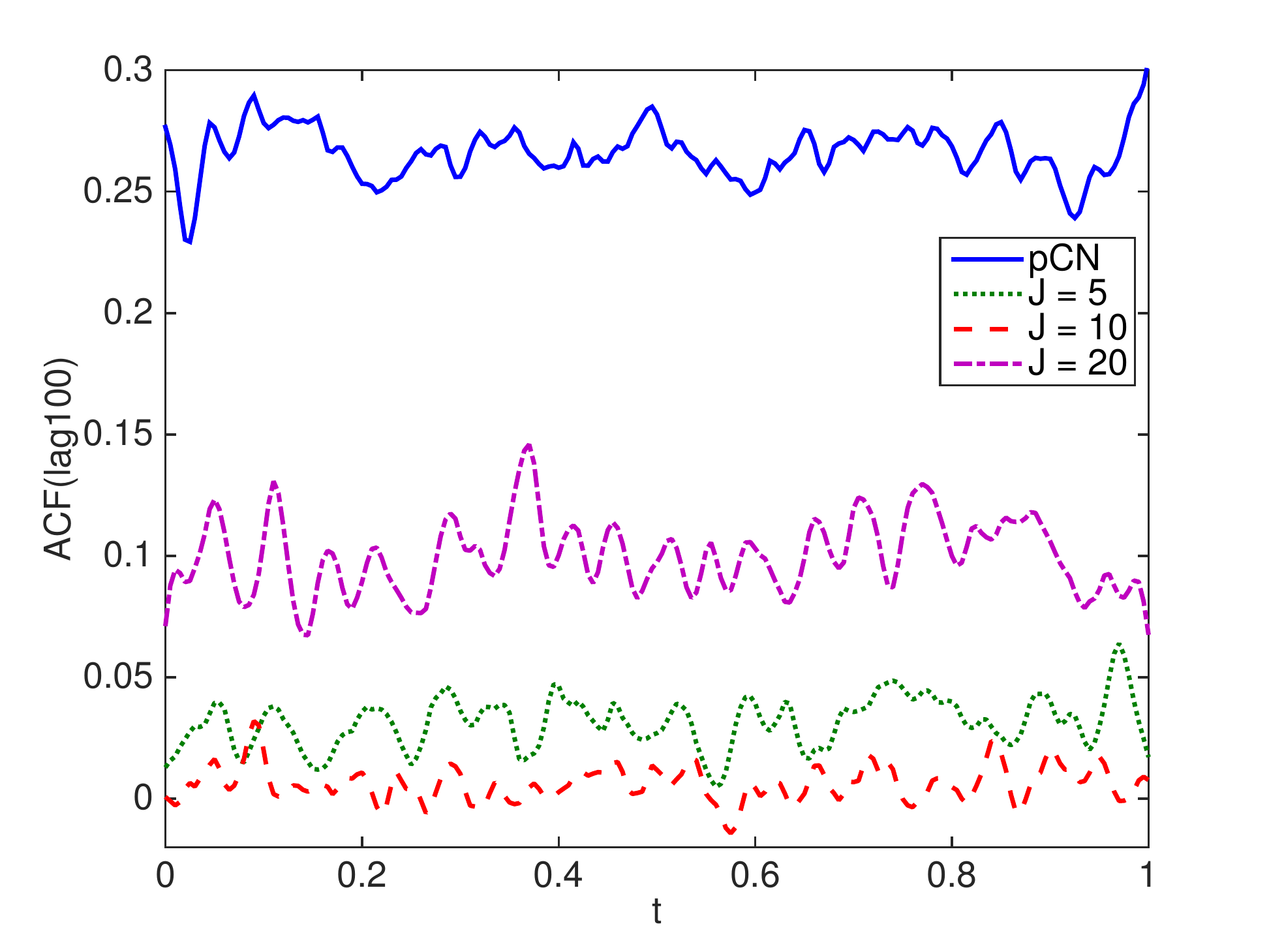}
\includegraphics[width=.5\textwidth]{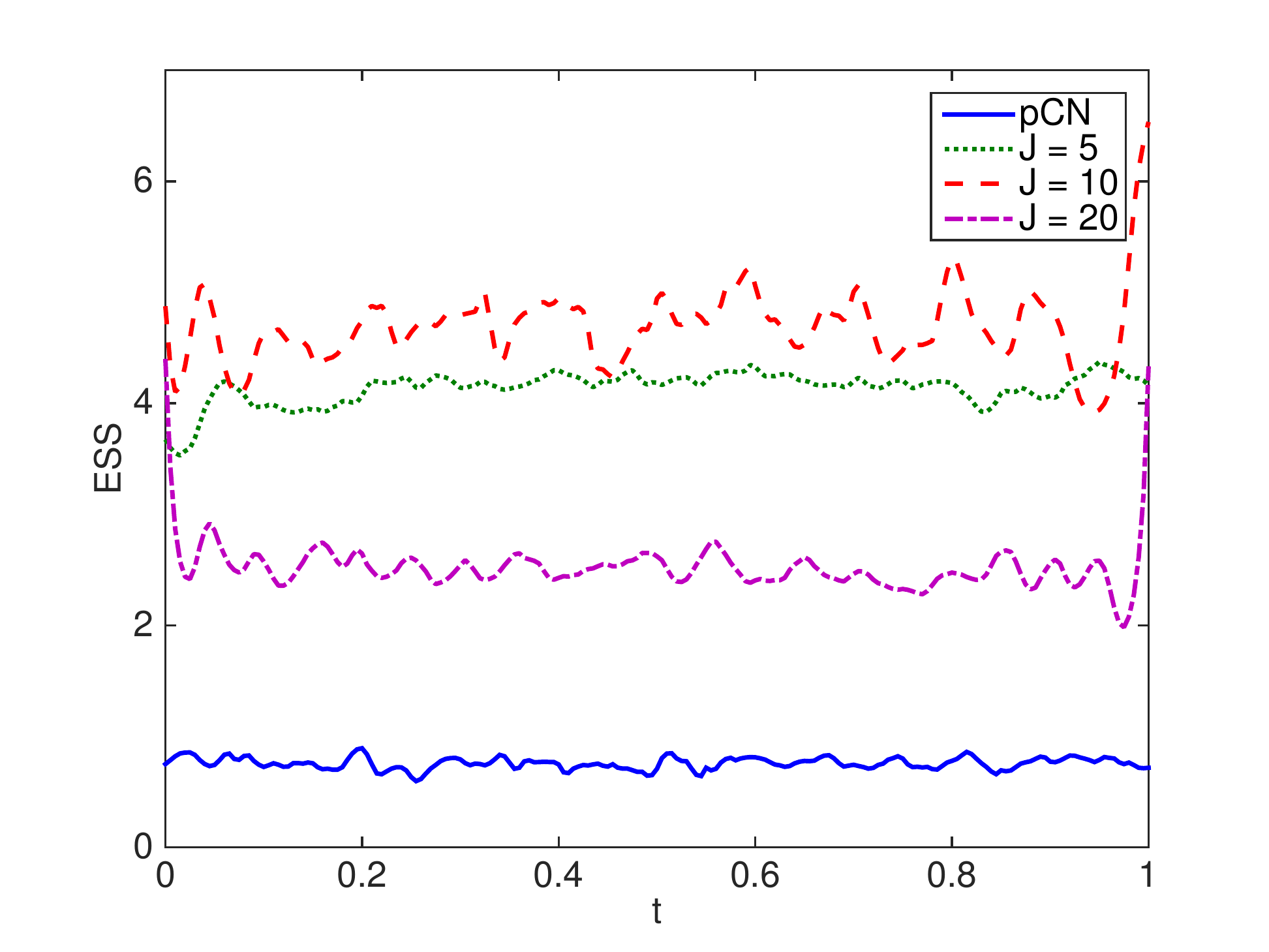}}
\caption{(for the ODE example: test 2) Left: ACF (lag 100) at each grid point for $J=5,\,10,\,20$. Right: ESS per 100 samples at each grid point for $J=5,\,10,\,20$.}
\label{f:acf100_J}
\end{figure}

%\begin{figure}
%\centerline{\includegraphics[width=.5\textwidth]{figs/ACF04_J2}
%\includegraphics[width=.5\textwidth]{figs/ACF08_J2}}
%\caption{(for the ODE example) ACF for the chains drawn by the hybrid method with $J=10,\,20,\,30$ at $t=0.4$ and $t=0.8$.}
%\label{f:acf_J2}
%\end{figure}

\subsection{Estimating the Robin coefficient}
In the last example, we consider a one-dimensional heat conduction equation in the region $x\in [0,L]$ ,
\begin{subequations}
\label{e:heat}
\begin{align}
&\frac{\partial u}{\partial t}(x,t) = \frac{\partial^2 u}{\partial x^2}(x,t), \\
&u(x,0)=g(x), 
\end{align}
with the following Robin boundary conditions:
\begin{align}
&-\frac{\partial u}{\partial x}(0,t) + \rho(t) u(0,t) = h_0(t),\\
 &\frac{\partial u}{\partial x}(L,t) + \rho(t) u(L,t) = h_1(t).
\end{align}
\end{subequations}
Suppose the functions $g(x)$,  $h_0(x)$ and $h_1(x)$ are all known, and we want to estimate the unknown Robin coefficient $\rho(t)$
from certain measurements of the temperature $u(x,t)$.  This example is studied in \cite{yao2016tv,hu2015adaptive,yang2009identification}.
Here we choose $L=1$, $T=1$ and the functions to be
\[g(x)=x^2+1,\quad  h_0 =t(2t+1),\quad h_1=2+t(2t+2).\]
A temperature sensor is place at the end $x=0$. 
 The solution is measured every $T/200$ time unit from $0$ to $T$ and the error in each measurement is assumed to be an independent Gaussian $N(0,0.1^2)$.
 %The true Robin coefficient and the resulting data are shown in Fig.~\ref{f:data_pde}.
 % In the computation, $501$ equally spaced grid points are used to represent the unknown. 
 Moreover, the prior is the same as that used in the first test of the ODE example.

The data is generated the same as the ODE example, with the true Robin coefficient randomly drawn from the prior distribution. 
Both the truth and the simulated data are shown in Fig.~\ref{f:data_pde}.
We sample the posterior distribution with the three methods: pCN, ApCN and the hybrid algorithm.
In the ApCN and the hybrid methods, ocne again we choose $J=14$ and draw $5\times10^5$ (adaptive) + $0.5\times10^5$ (prerun) samples.
In the pCN method, we draw $5.5\times10^5$ samples directly. 
We show the obtained posterior mean in Fig~\ref{f:data_pde} (right). 
We now compare the performance of the three methods.
First we plot the ACF of the samples obtained by the methods at $t=0.1$ and $t=0.5$ in Fig~\ref{f:acf_pde},
and then we plot the ACF at lag 100 and the ESS at all the grid points in Fig~\ref{f:acf100-ess-pde}. 
In all these figures, we can see that, while both adaptive algorithms yield better results than the standard pCN, 
the hybrid algorithm clearly outperforms the ApCN method,
which again indicates that the new algorithm can significant improve the sampling efficiency over the ApCN approach, by taking the correlations 
between eigenmodes into account.

\begin{figure}
\centerline{%\includegraphics[width=.5\textwidth]{}
\includegraphics[width=.5\textwidth]{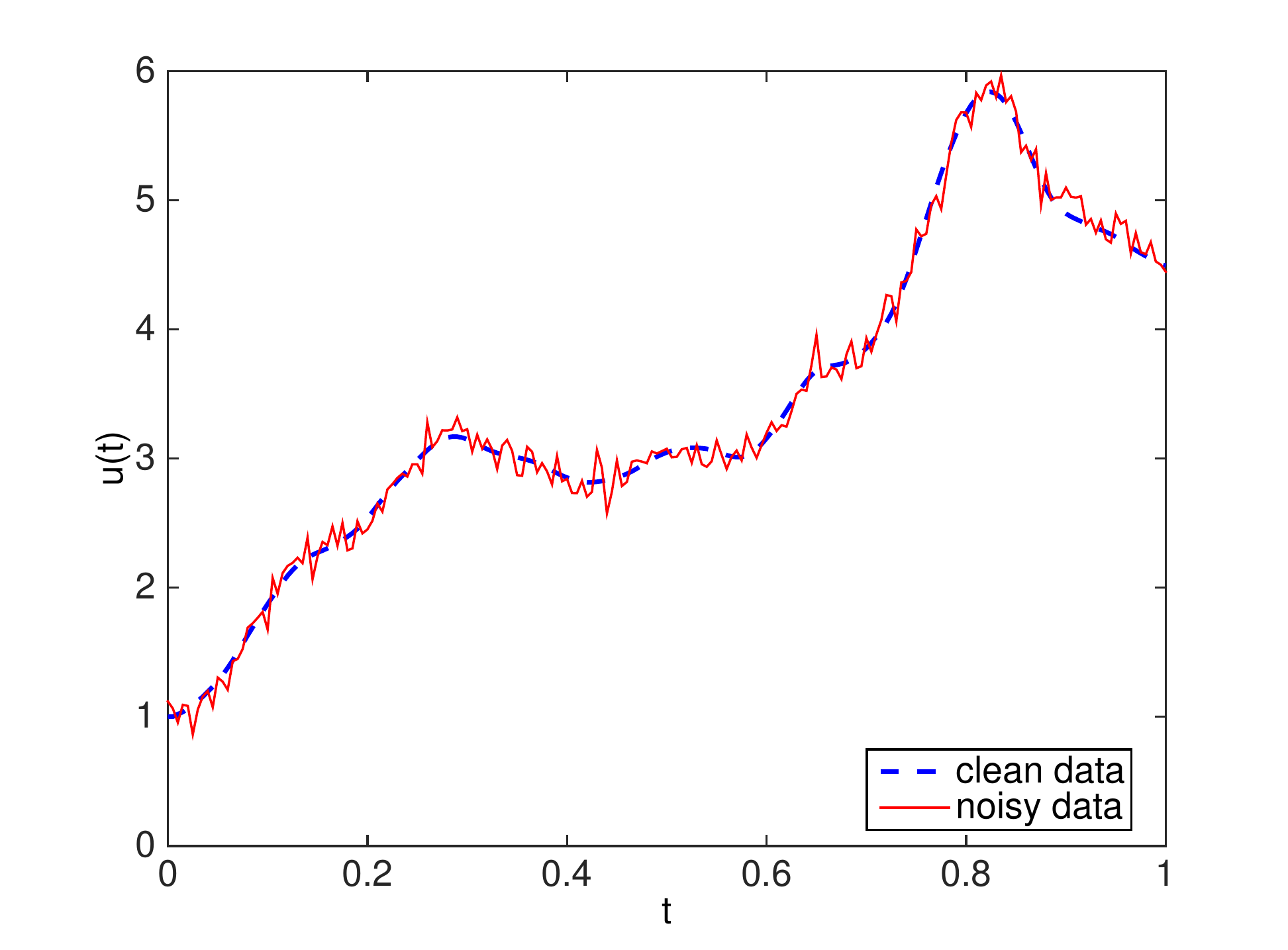}
\includegraphics[width=.5\textwidth]{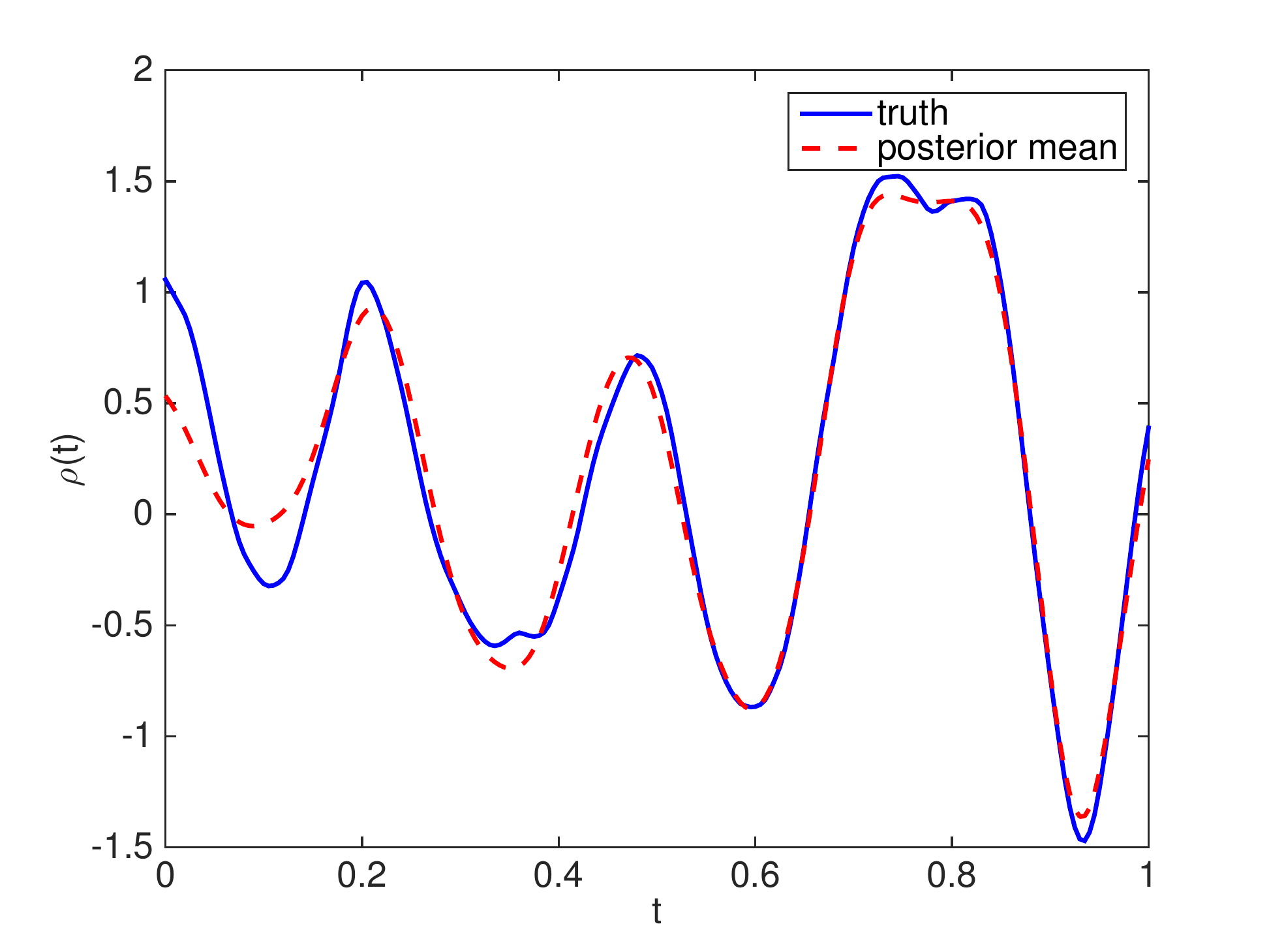}}
\caption{(for the Robin example) Left: the simulated data. Right: the truth and the posterior mean.}
\label{f:data_pde}
\end{figure}

%\begin{figure}
%\centerline{\includegraphics[width=.5\textwidth]{figs/postmean_robin}\includegraphics[width=.5\textwidth]{figs/postmean_robin}}
%\caption{(for the Robin example) The posterior mean (dashed) compared to the truth (solid).}
%\label{f:postmean_robin}
%\end{figure}

\begin{figure}
\centerline{\includegraphics[width=.5\textwidth]{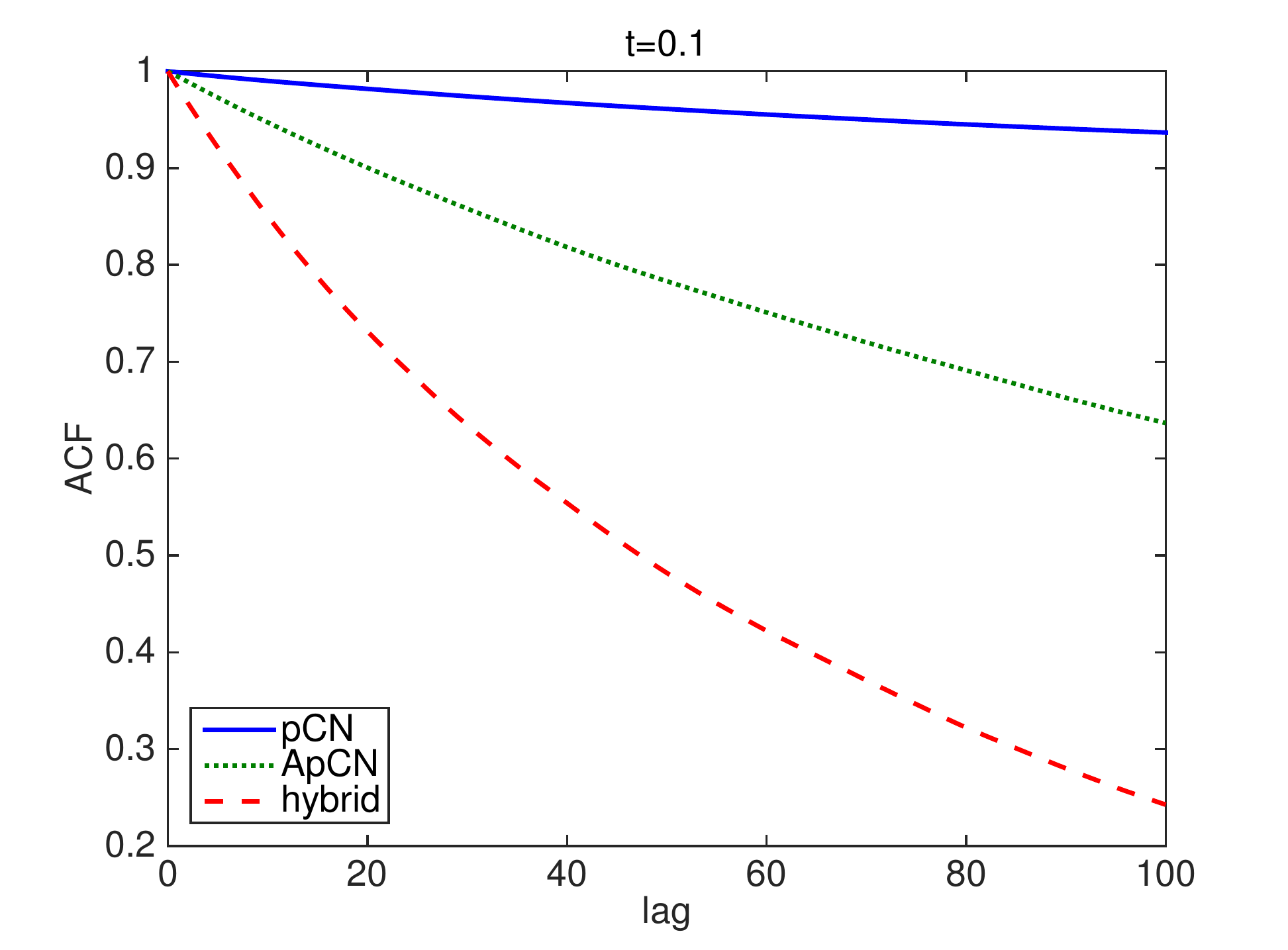}
\includegraphics[width=.5\textwidth]{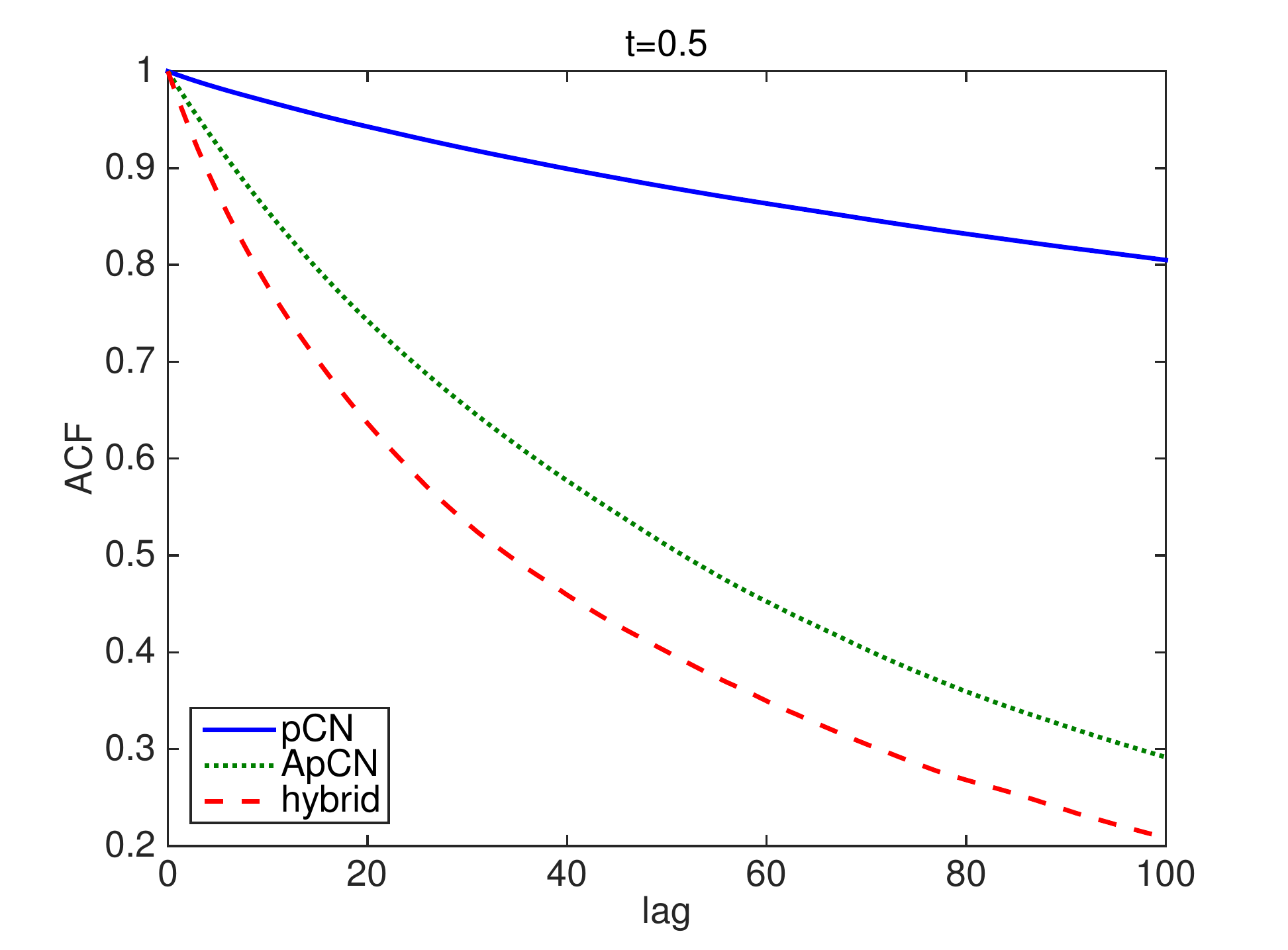}}
\caption{(for the Robin example) ACF for the pCN and the ApCN methods at $t=0.1$ (left) and $t=0.5$ (right).}
\label{f:acf_pde}
\end{figure}

\begin{figure}
\centerline{\includegraphics[width=.5\textwidth]{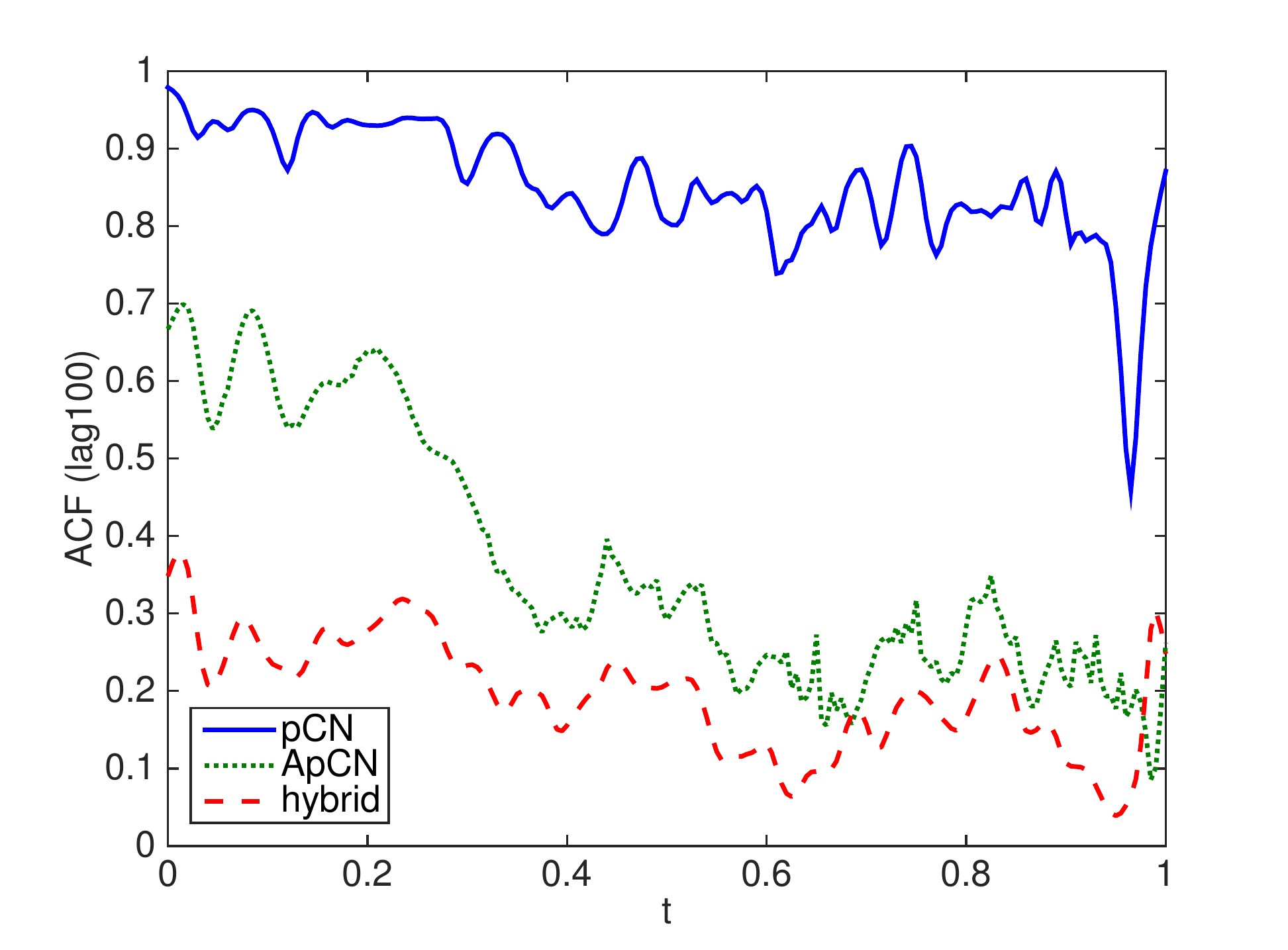}
\includegraphics[width=.5\textwidth]{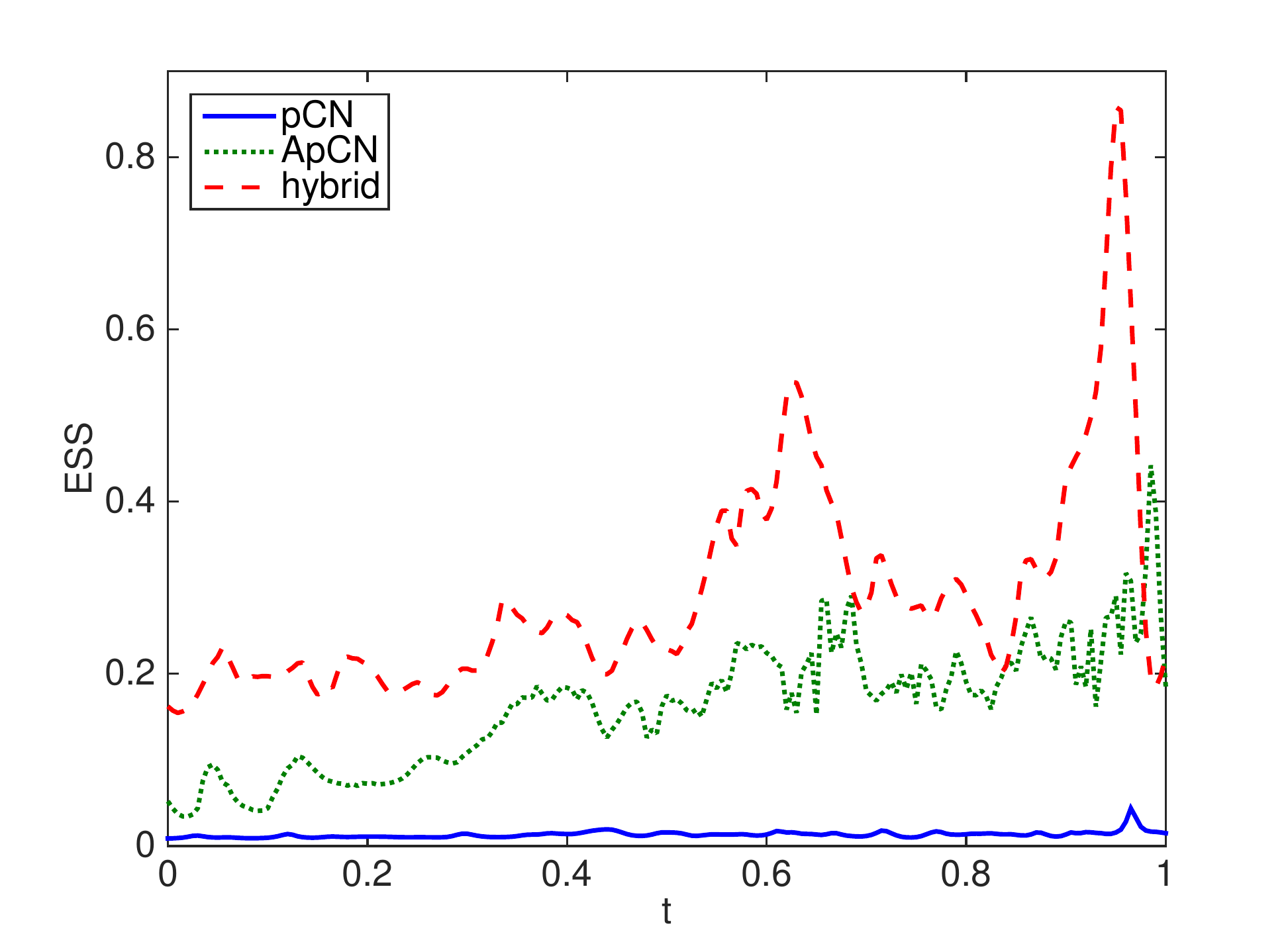}}
\caption{(for the Robin example) Left: ACF (lag 100) at each grid point. Right: the ESS at each grid point. }
\label{f:acf100-ess-pde}
\end{figure}

\section{Conclusions}\label{sec:conclusion}
In summary, we consider MCMC simulations for Bayesian inferences in function spaces. 
We develop a hybrid algorithm, which combines the adaptive Metropolis and the pCN algorithm, particularly addressing some limitations of our previously developed ApCN method. 
The implementation of the proposed algorithm is rather simple, without requiring any information of the underlying models.
We also show that the hybrid adaptive algorithm satisfies certain important ergodicity conditions without making any modifications of the likelihood function.
Finally we demonstrate the  efficiency of the hybrid adaptive algorithm with several numerical examples,
in which we see that the hybrid algorithm can evidently outperform the ApCN method, thanks to its ability to take into account the correlations between eigenfunctions. 
Note here that, in problems where the correlations between eigenfunctions are weak, the hybrid may not improve the efficiency much over the ApCN method.
Nevertheless, as is illustrated by our numerical examples, in that case, the hybrid algorithm's performance is at least comparable to that of the ApCN.  
We expect the hybrid adaptive algorithm can be useful in many applied problems, especially in those with underlying models whose gradient information is difficult to obtain.

Some improvements of the hybrid algorithm are possible. 
First, in the present formulation of the hybrid algorithm, we choose to adapt in the subspace spanned by the eigenfunctions corresponding to the leading eigenvalues.
This strategy can be improved by allowing the algorithm to automatically identify this ``data-informed subspace'' during the iterations. 
Moreover, reduced models or surrogates (see e.g. \cite{cui2015data,li2014adaptive,yan2015stochastic}) of the forward operator may be constructed and used in the subspace to accelerate the simulation. 
%In some existing works such as \cite{cui2016dimension}, such an
% subspace is identified by using the derivative information of the forward operator. 
Another issue is that, in this work we only show that the hybrid algorithm satisfies the DA condition, and a more comprehensive study of the ergodicity property
of the algorithm is certainly needed.  
We plan to address these issues in future studies.

\bibliographystyle{siam}
\bibliography{hybrid}
\end{document}